\newtheorem{theorem}{Theorem}[section]
\newtheorem{lemma}{Lemma}[section]
\newtheorem{corollary}[theorem]{Corollary}
\newtheorem{definition}{Definition}[section]
	\def\maxproblemsmall#1#2#3#4{\fbox
		 {\begin{tabular*}{0.47\textwidth}
			{@{}l@{\extracolsep{\fill}}l@{\extracolsep{6pt}}l@{\extracolsep{\fill}}c@{}}
				#1 & $\maximize{#2}$ & $#3$ & $ $ \\[5pt]
					  $\subject\ $ &   & $#4$ & $ $
			\end{tabular*}}
			}
		\def\problemsmall#1#2#3#4{\fbox
		 {\begin{tabular*}{0.47\textwidth}
			{@{}l@{\extracolsep{\fill}}l@{\extracolsep{6pt}}l@{\extracolsep{\fill}}c@{}}
				#1 & $\minimize{#2}$ & $#3$ & $ $ \\[5pt]
					  $\subject\ $ &    & $#4$ & $ $
			\end{tabular*}}
			}
\def\Zbb{{\mathbb{Z}}}
\def\Nbb{{\mathbb{N}}}
\def\Fbb{{\mathbb{F}}}
\def\bfone{{\bf 1}}
\def\t{^\top}
\def\th{^{\rm th}}
\def\bkE{{\rm I\kern-.17em E}}
\def\bk1{{\rm 1\kern-.17em l}}
\def\bkD{{\rm I\kern-.17em D}}
\def\bkR{{\rm I\kern-.17em R}}
\def\bkP{{\rm I\kern-.17em P}}
\def\bkZ{{\bf{Z}}}
\def\bkE{{\rm I\kern-.17em E}}
\def\bk1{{\rm 1\kern-.17em l}}
\def\bkD{{\rm I\kern-.17em D}}
\def\bkR{{\rm I\kern-.17em R}}
\def\bkP{{\rm I\kern-.17em P}}
		\def\bkE{{\rm I\kern-.17em E}}
		\def\bk1{{\rm 1\kern-.17em l}}
		\def\bkD{{\rm I\kern-.17em D}}
		\def\bkR{{\rm I\kern-.17em R}}
		\def\bkP{{\rm I\kern-.17em P}}
		\def\bkY{{\bf \kern-.17em Y}}
		\def\bkZ{{\bf \kern-.17em Z}}
		\def\bkC{{\bf  \kern-.17em C}}
\def\Escr{\mathcal{E}}
\def\Dsf{{\sf D}}
\def\Isf{{\sf I}}
\def\Nsf{{\sf N}}
\def\Csf{{\sf C}}
\def\Hscr{{\cal H}}
\def\Cscr{{\cal C}}
\def\where{\;\textrm{where}\;}
\def\eef{\;\textrm{if}\;}
\def\ow{\;\textrm{otherwise}}
\def\non{\nonumber}
\let\forallnew\forall
\renewcommand{\forall}{\forallnew\ }
\let\forall\forallnew
\def\ds{\displaystyle}
\def\b12{(\beta_1,\beta_2)}
\newcommand{\I}[1]{\mathbb{I}_{\{#1\}}}
\def\subject{\hbox{\rm subject to}}
\newcommand{\Real}{\ensuremath{\mathbb{R}}}
\def\limn{\ds \lim_{n \rightarrow \infty}}
\def\exp{\mathop{\hbox{\rm exp}}}
\def\half  {{\textstyle{1\over 2}}}
\def\superstar{^{\raise 0.5pt\hbox{$\nthinsp *$}}}
\def\SUPERSTAR{^{\raise 0.5pt\hbox{$*$}}}
\def\lamstarT {\lambda^{\raise 0.5pt\hbox{$\nthinsp *$}T}}
\newlength{\noteWidth}
\long\def\notes#1{\ifinner
{\tiny #1}
\else
\marginpar{\parbox[t]{\noteWidth}{\raggedright\tiny #1}}
\fi\typeout{#1}}
 \def\notes#1{\typeout{read notes: #1}} %uncomment for final version
\newcommand{\ie}{i.e.\@\xspace} %%% i.e.,
\newcommand{\eg}{e.g.\@\xspace} %%% e.g.,
\newcommand{\etal}{et al.\@\xspace} %%% e.g., Gill \etal (1986)
\newcommand{\Matlab}{\textsc{Matlab}\xspace}
\newcommand{\minimize}[1]{\displaystyle\minim_{#1}}
\newcommand{\minim}{\mathop{\hbox{\rm minimize}}}
\newcommand{\maximize}[1]{\displaystyle\maxim_{#1}}
\newcommand{\maxim}{\mathop{\hbox{\rm maximize}}}
\def\spose#1{\hbox to 0pt{#1\hss}}
\def\text #1{\hbox{\quad#1\quad}}
\def\nthinsp{\mskip -2   mu}
\def\texitem#1{\par\smallskip\noindent\hangindent 25pt
               \hbox to 25pt {\hss #1 ~}\ignorespaces}
\def\bkE{{\rm I\kern-.17em E}}
\def\bk1{{\rm 1\kern-.17em l}}
\def\bkD{{\rm I\kern-.17em D}}
\def\bkR{{\rm I\kern-.17em R}}
\def\bkP{{\rm I\kern-.17em P}}
\def\bkZ{{\bf{Z}}}
\begin{document}
\title{Non-asymptotic Upper Bounds for Deletion Correcting Codes}
\author{Ankur A. Kulkarni\thanks{Both authors are at the Coordinated Science Laboratory at the University of Illinois at Urbana-Champaign, Urbana, Illinois, U.S.A., 61801. They can be reached at \texttt{akulkar3@illinois.edu} and \texttt{kiyavash@illinois.edu}, respectively. This work was supported in part by AFOSR under Grants FA9550-11-1-0016, FA9550-10-1-0573, and NSF grants CCF 10-54937 CAR and CCF 10-65022 Kiyavash.} \qquad Negar Kiyavash}
\date{}
\maketitle
\begin{abstract}
Explicit non-asymptotic upper bounds on the sizes of multiple-deletion correcting codes are presented. In particular, the largest single-deletion correcting code for $q$-ary alphabet and string length $n$ is shown to be of size at most $\frac{q^n-q}{(q-1)(n-1)}$. 
An improved bound on the asymptotic rate function is obtained as a corollary. Upper bounds are also derived on sizes of codes for a constrained source that does not necessarily comprise of all strings of a particular length, and this idea is demonstrated by application to sets of run-length limited strings.

The problem of finding the largest deletion correcting code is modeled as a matching problem on a hypergraph. This problem is formulated as an integer linear program. The upper bound is obtained by the construction of a feasible point for the dual of the linear programming relaxation of this integer linear program. 

The non-asymptotic bounds derived imply the known asymptotic bounds of Levenshtein and Tenengolts and improve on known non-asymptotic bounds. Numerical results support the conjecture that in the binary case, the Varshamov-Tenengolts codes are the largest single-deletion correcting codes.
\end{abstract}
\begin{keywords}
Deletion channel, multiple-deletion correcting codes, single-deletion correcting codes, non-asymptotic bounds, hypergraphs, integer linear programming, linear programming relaxation, Varshamov-Tenengolts codes.
\end{keywords}

\section{Introduction}
A \textit{deletion channel} is a communication channel that takes a string of symbols as its input and transmits only a subset of the input symbols leaving the order of the symbols unchanged. Symbols that are not transmitted constitute the errors in the channel and are called \textit{deletions}. A deletion channel is distinct from the widely studied erasure channel wherein  the positions of the errors are \change{fixed}{known}. This paper mainly concerns deletion channels where the \change{deletion probability is unity. Of these we are concerned with channels where at most one symbol is deleted, called \textit{single-deletion channels}.}{maximum number of deletions, denoted $s$, is fixed.} 

A \textit{codebook} or a \textit{deletion correcting code} for the deletion channel is a set $C$ of input strings, no two of which on transmission through the channel can result in the same output. 
For a string $x$, call the set of strings obtained by deletion of  \add{$s$} symbols from $x$, the $s$-\textit{deletion set} of $x$. An $s$-deletion correcting code  is thus a set of input strings with pairwise disjoint $s$-deletions sets. 

To explain our contribution, consider the case where $s=1$ (the \textit{single}-deletion channel). An open problem pertaining to this channel is the determination of the size of the largest or \textit{optimal} codebook $C=\Cscr_n^*$, for input strings comprising of all strings of length $n$~\cite{sloane02single}. The classical bound of Levenshtein~\cite{levenshtein66binary} provides one benchmark for optimality. For the case of binary strings, Levenshtein \cite{levenshtein66binary} showed that the size $|\Cscr^*_{n}|$ of an optimal codebook for the single-deletion channel is \textit{asymptotically} at most $\frac{2^n}{n}$. It is important to note here the sense in which this asymptoticity is being defined.  A function $f: \Nbb \rightarrow \Real$ is said to be asymptotically less than or equal to another function $g: \Nbb \rightarrow \Real$, written $f\lesssim g$, if $\lim_{n \rightarrow \infty} \frac{f(n)}{g(n)} \leq 1$. $f$ is said to be asymptotically equal to $g$, written $f\sim g$, if $f\lesssim g$ and $g\lesssim f$. Thus Levenshtein's result says that $\lim_{n\rightarrow \infty} \frac{|\Cscr^*_n|}{2^n/n} \leq 1$. Levenshtein then constructs a codebook of size at least $\frac{2^n}{n+1}$, thereby proving $\frac{2^n}{n} \lesssim |\Cscr^*_n|$, and hence concludes that the optimal codebook $\Cscr^*_n$ has size  asymptotically equal to $\frac{2^n}{n}$, \ie $\Cscr^*_n$ satisfies $\lim_{n \rightarrow \infty}\frac{|\Cscr^*_n|}{2^n/n}=1$.

If the function $g$ is bounded, the  asymptotic equality $f\sim g$ implies equality of the limiting values of $f(n)$ and $g(n)$ or their near-equality for sufficiently large  $n$. However since $g(n)=2^n/n$ is unbounded, Levenshtein's asymptotic results do not allow one to obtain a fine approximation to $|\Cscr^*_n|$, or conclude if for a particular $n$, $|\Cscr^*_n|$ is greater or less than $\frac{2^n}{n}$, or even conclude the boundedness or unboundedness of the difference $||\Cscr^*_n| -\frac{2^n}{n}|$. Indeed, the best known codes for the binary version of this channel, the Varshamov-Tenengolts (VT) codes~\cite{varshamov65codes}, are  of size at least $\frac{2^n}{n+1}$ for input length $n$. Although this sequence is asymptotically equal  to $\frac{2^n}{n}$ (and recently verified by exact search to be optimal for string lengths $n \leq 10$~\cite{sloaneweb}), the difference $\frac{2^n}{n}-\frac{2^n}{n+1}$ grows to infinity. 
%Besides, asymptotically, the sequences $\frac{2^n}{n}, \frac{2^n}{n+1}$, $\frac{2^n}{n-1}$ and $\frac{2^n}{n+o(n)}$ are all equal, which further complicates the matter of deciding the optimality of a code construction.

In other words, for this problem, asymptotic optimality of a codebook does not say much about its optimality per se. The challenges noted above continue to hold (and are perhaps more severe) for larger alphabet and larger number of deletions. For the case of multiple deletions, asymptotic bounds exist, thanks to Levenshtein~\cite{levenshtein66binary} for binary alphabet, but little is known about the quality of these bounds, since no matching lower bounds exist. A more useful bound for any such channel would be  a \textit{non-asymptotic upper bound} that also implies known asymptotic bounds.  Such a bound can serve as a hard bound on the size of a codebook for any string length and help in assessing the quality of specific code constructions.  Such non-asymptotic upper bounds are the subject of this paper. 

%The methodology for obtaining these bounds is not limited to the case of a single deletion. 
We derive explicit non-asymptotic upper bounds on the sizes of codebooks for any number of deletions $s$ and any alphabet size $q$. These bounds imply the known asymptotic bounds of Levenshtein~\cite{levenshtein66binary} and generalize them to larger alphabet. For the case of a single deletion we obtain this bound in closed form. We show that for string length $n$, an optimal $q$-ary single-deletion codebook has size at most $\frac{q^n-q}{(q-1)(n-1)}$. This implies the asymptotic upper bound of $\frac{q^n}{(q-1)n}$ shown by Tenengolts \cite{tenengolts84nonbinary}. 
% We show that the largest binary single-deletion correcting code for strings of length $n$ is of size at most . 
In the binary case, together with the size of the VT codes (which effectively provide non-asymptotic \textit{lower} bounds), our upper bound $\frac{2^n-2}{n-1}$ implies Levenshtein's asymptotic results. 

From these bounds we derive an upper bound on the \textit{asymptotic rate function}. For a channel where the number of deletions is a constant fraction of string length, this function gives the asymptotic value of the rate of the largest deletion correcting code, as a function of the fraction of symbols that are deleted. This bound on the rate function improves on the previous bound shown by Levenshtein~\cite{levenshtein02bounds}.

 We then extend this methodology to derive bounds on deletion correcting codes for \textit{constrained sources}. These are codebooks for a specific set of strings, \ie, not necessarily the set of \textit{all} strings of a particular length. Recording systems such as magnetic tapes impose physical constraints on the patterns that symbols can take in codewords~\cite{huffman98handbook}. If such a code is subsequently transmitted through a deletion channel, the codewords can be thought of as a constrained source. As a specific demonstration of this idea, we derive non-asymptotic upper bounds on sizes of codebooks for run-length limited sources for the single-deletion channel.

The bounds are obtained as follows. We characterize the largest codebook for the \remove{single-}deletion channel as a maximum \textit{matching} on a suitably defined hypergraph. The problem of finding a maximum matching is written as a 0-1 integer linear program. The \textit{fractional matching} on this hypergraph is the solution of the linear programming relaxation of this integer linear program, and its value is an upper bound on the size of the maximum matching. Our upper bound is obtained by constructing a feasible solution for the \textit{dual} of this linear program. \add{For the single-deletion channel} the construction is such that it allows for the calculation of the dual objective in closed form as $\frac{q^n-q}{(q-1)(n-1)}$. Unfortunately, for larger number of deletions, due to the complicated nature of the resulting expressions, we are unable to produce closed form expressions.

Computations on a computer reveal that for the binary single-deletion channel the optimal fractional matching size is quite close to the size of the  VT codes. For strings of length up to 14, the difference between the size of the VT codes and the optimal fractional matching is at most $8$; this indicates that the VT codes are either optimal or very close to being optimal (at least up to string length 14). On a side note, the hypergraph approach also appears to be more amenable to algorithmic approaches due to its compact representation; this aspect of this paper may be of independent interest. 

\subsection{Related work}
\change{An}{A} wide-ranging survey on various results and challenges associated with deletion correction and its variants was recently presented by Mercier \etal \cite{mercier10survey}. Sloane's survey~\cite{sloane02single} deals specifically with the binary single-deletion channel and illuminates several deep open questions pertaining to the VT codes. Here we recall some highlights from this area of work. 

The study of the deletion channel has a long history going back at least to the seminal work of Levenshtein~\cite{levenshtein66binary} wherein asymptotic bounds on the sizes of optimal binary codebooks were derived. For $s$ deletions and binary input strings,  Levenshtein~\cite{levenshtein66binary} showed that the largest codebook $\Cscr^*_{2,s,n}$ for string length $n$ satisfies the asymptotic relations
\begin{equation}
 \frac{2^s(s!)^22^n}{n^{2s}} \lesssim |\Cscr^*_{2,s,n}| \lesssim \frac{s!2^n}{n^s}. \label{eq:levenasymp}
\end{equation} 
 Levenshtein~\cite{levenshtein66binary} also noticed that the Varshamov-Tenengolts codes~\cite{varshamov65codes}, which were proposed for asymmetric error correction, served as asymptotically optimal codes for the binary single-deletion channel; these remain to date the best known codes and have recently been confirmed to be optimal for string length up to 10. An independent line of study on this topic appears to have been contemporaneously pursued by Ullman~\cite{ullman67capabilities}, \cite{ullman_near-optimal_1966}. 
 
Thereafter there have been many efforts at code construction.  An attempt at generalizing the VT codes for the binary multiple-deletion channel was made by \change{Heldberg}{Helberg} and Ferreira~\cite{helberg02multiple}; \add{that this generalization indeed corrects deletion errors was recently shown by Abdel-Ghaffar} \etal~\cite{abdel-ghaffar12helbergs}. For non-binary alphabet this problem was first studied by Calabi and Harnett~\cite{calabi69general} and Tanaka and Kasai~\cite{tanaka76synchronization}. Later Tenengolts proposed a construction similar to the VT codes for the $q$-ary single-deletion channel and showed that the optimal codebook for string length $n$, $\Cscr^*_{q,1,n}$, \change{was}{is} of size at least $\frac{q^n}{qn}$ and \change{satisfied}{satisfies} the asymptotic upper bound $|\Cscr_{q,1,n}^*| \lesssim \frac{q^n}{(q-1)n}$~\cite{tenengolts84nonbinary}. Interestingly, no asymptotic bounds for $q$-ary $s$-deletion correcting codes appear to have been explicitly articulated, though Levenshtein's original proof from~\cite{levenshtein66binary} seems extendable to $q$-ary strings. The VT codes are number-theoretic and the underlying number-theoretic logic was generalized to correct larger number of asymmetric errors by Varshamov~\cite{varshamov73class}. 

Butenko \etal attempted to find codes algorithmically by casting this problem as a maximum independent set problem on a class of graphs~\cite{butenko02finding}. Schulman and Zuckerman  considered a construction that is in part algorithmic and showed the existence of `asymptotically good' codes for deletions whose number increases proportionally to the length of the string~\cite{schulman99asymptotically}. More recently, the algorithmic approach  has been pursued by Khajouei \etal~\cite{khajouei11algorithmic} and a graph coloring based approach was studied by Cullina \etal~\cite{cullina12coloring}. Finding codes for the deletion channel, either algorithmically or through a number-theoretic construction, is a considerable challenge, as evidenced by the attempts at achieving the records for largest codebooks on the webpage maintained by Sloane~\cite{sloaneweb}.

Deletion errors have also been studied for run-length limited sources -- which we consider in this paper as a example of a constrained source -- by Roth and Siegel~\cite{roth94lee}, Hilden \etal~\cite{hilden91shift} and Bours~\cite{bours94construction}, amongst others.  However in these works, the deletion errors considered have a specific pattern and do not exactly correspond to the deletion channel we consider. Exceptions to this are the recent works of Cheng \etal \cite{cheng12moment} and Palun\v{c}i\'{c} \etal~\cite{paluncic12multiple} which consider codes for run-length limited sources for the deletion channel in its full generality.

The topic of deletion errors has spawned research on related questions, such as the existence of `perfect codes' (Levenshtein~\cite{levenshtein92perfect}), and the combinatorial problems of counting subsequences (\eg, Hirschberg and Regnier~\cite{hirschberg00tight}, Swart and Ferreira~\cite{swart03note}, Mercier \etal~\cite{mercier08number} and more recently, Liron and Langberg~\cite{liron12characterization}) and the reconstruction of sequences (Levenshtein~\cite{levenshtein01efficient,levenshtein01efficientjcomb}). Another body of active ongoing research studies the capacity of the deletion channel (\eg, Mitzenmacher~\cite{mitzenmacher06polynomial}, Kanoria and Montanari~\cite{kanoria09deletion}, and Diggavi \etal~\cite{diggavi07capacity}).

The question of non-asymptotic upper bounds, which is our interest, is comparatively less studied. One may scan Levenshtein's proof of the asymptotic bound from \cite{levenshtein66binary} to see if a non-asymptotic bound has been found in it as an intermediate step. For the single deletion channel, the bound so discovered (see Sloane's proof~\cite[Theorem 2.5]{sloane02single}) is greater than $\frac{2^n}{n-2\sqrt{n \log n}}$ (for binary alphabet) which is clearly weaker than our bound. In fact, Levenshtein~\cite{levenshtein02bounds} has presented a somewhat more general bound on the size of a $q$-ary $s$-deletion correcting code: 
\begin{equation}
|\Cscr^*_{q,s,n}| \leq \frac{q^{n-s}}{\sum_{i=0}^s\binom{r-s+1}{i} } + q \sum_{i=0}^{r-1}\binom{n-1}{i}(q-1)^i, \label{eq:leven}
\end{equation}
where $r$ is any integer satisfying $1\leq s\leq r+1 \leq n$. It is not clear which value of $r$ provides the strongest bound of these  (although a heuristic argument using Stirling's approximation suggests that $r \approx \frac{n}{2}$ should be optimal in the binary \add{single-deletion} case; this is essentially Levenshtein's original argument~\cite{levenshtein66binary}). We have found via numerical calculation that the strongest of the bounds in \eqref{eq:leven} is weaker than our bound. Additionally, our bound \add{in the single-deletion case} also has the attractiveness of being in closed form. 
Levenshtein in another paper derives another non-asymptotic bound for the size of a $q$-ary single-deletion codebook~\cite[Theorem 5.1]{levenshtein92perfect},
\begin{equation}
|\Cscr_{q,1,n}^*| \leq \frac{q^{n-1}+(n-2)q^{n-2}+q}{n}, \label{eq:leven2}
\end{equation}
but this bound is asymptotically much weaker than Tenengolts' asymptotic bound of $\frac{q^n}{(q-1)n}$ (their ratio grows to infinity; \add{our bound implies Tenengolts' asymptotic bound}). Sloane's website~\cite{sloaneweb} contains several numerical bounds found by calculating the Lov\'{a}sz $\vartheta$~\cite{west00introduction} on certain graphs. But unlike our bounds, there are no expressions (closed form or otherwise) for these bounds.

The scarcity of non-asymptotic upper bounds is perhaps due to the property that deletion sets of distinct strings can have distinct sizes. This point has also been stressed by Sloane~\cite[Section ``Optimality'']{sloane02single}: \textit{``It is more difficult to obtain upper bounds for deletion-correcting codes than for conventional error-correcting codes, since the disjoint balls $D_e(u)$ (deletion sets) associated with the codewords ... do not all have the same size. Furthermore the metric space ($\Fbb^n_2, d$)\footnote{$d$ is the Levenshtein or edit distance, cf. Definition \ref{def:edit}.} is not an association scheme and so there is no obvious linear programming bound.''} In the light of this comment it is interesting that our non-asymptotic bound is obtained from a linear programming argument, and it relies critically on the sizes of the deletion sets.

\subsection{Organization}
This paper is organized as follows. Section \ref{sec:pre} comprises of preliminaries including, notation, problem definition, background on hypergraphs and the derivation of lemmas that are of use in our analysis. Section \ref{sec:bounds} contains the hypergraph characterization of the optimal codebook and the derivation of the upper bounds for single-deletion correcting codes. In Section \ref{sec:larger} we extend the analysis to obtain bounds on codes for larger number of deletions and derive a bound on the asymptotic rate function. 
In Section \ref{sec:constrained}, we derive bounds on codebooks for constrained sources, in particular, for run-length limited sources. Numerical simulations comparing the values of Levenshtein's bound from \eqref{eq:leven}, our bound, the tightest bound obtainable by our logic,  and the best known codes are presented in Section \ref{sec:numeric}. In Section \ref{sec:disc} we discuss our results and possible avenues for tightening our bound and  conclude the paper.

\section{Preliminaries} \label{sec:pre}
Let $\Fbb_q = \{0,1,\hdots, q-1\}$ be a $q$-ary alphabet and let $\Fbb_q^n$  denote the set of all $q$-ary sequences of length $n$. Any such $q$-ary sequence is called a \textit{string}. We let $\Fbb_q^* = \bigcup_{n=0}^\infty \Fbb_q^n$ denote set of all strings; here $\Fbb_q^0$ denotes the empty string. 
Let $x= x_1\hdots x_n$ be a string. A \textit{subsequence} of $x$ is formed by taking a subset of the symbols of 
$x$ and aligning them without altering their order. In other words, a {subsequence} of $x$ is a sequence $y=x_{i_1}\hdots x_{i_k}$, where $1 \leq k \leq n$  and the indices satisfy $1 \leq i_1 < \hdots < i_k \leq n$;  $x$ is called a \textit{supersequence} of $y$. We say that $y$ is obtained from $x$ by the \textit{deletion} of $n-k$ symbols and $x$ is 
obtained from $y$ by the \textit{insertion} of $n-k$ symbols. 
%For any $x,y \in \Fbb^*_2$, let $l(x),l(y)$ be the lengths of $x$ and $y$, let $\bar{l}(x,y)$ be the minimum length of a common supersequence of $x,y$ and let $\underline{l}(x,y)$ be maximum length of a common subsequence of $x,y$. The empty strings is a common subsequence of any two strings and concatenation of the strings is a common supersequence, whereby these $\bar{l}$ and $\underline{l}$ are well-defined and finite on $\Fbb_2^*\times \Fbb_2^*$. The Levenshtein distance between $x$ and $y$
%has the following characterization \cite{levenshtein92perfect}
%\begin{equation}
% d(x,y) = l(x) + l(y) -2\underline{l}(x,y) = 2\bar{l}(x,y) -  l(x) - l(y) = \bar{l}(x,y) - \underline{l}(x,y). \label{eq:leven}
%\end{equation} 
% It is easy to show that $d$ is a metric. 

A specific type of subsequence that is important for our results is a \textit{run}, defined below.
\begin{definition} \label{def:run}
Let $x=x_1\hdots x_n \in \Fbb_q^n$ be a string. A \textit{run} of $x$ is a maximal contiguous subsequence with identical symbols, \ie a run of $x$ is a sequence  $x_ix_{i+1}\hdots x_{i+j}$, $1 \leq i \leq i+j \leq n$ with the property that $x_i=x_{i+1} =\hdots =x_{i+j}$ and the properties that, \add{a)} if \change{$1<i\leq i+j <n$, $x_{i-1} \neq x_i$ and $ x_{i+j} \neq x_{i+j+1}$}{$1<i$ then $x_{i-1} \neq x_i$, and b) if $i+j <n$, then $ x_{i+j} \neq x_{i+j+1}$}. For any $x \in \Fbb_q^*$, $r(x)$ denotes the number of runs of $x$.
\end{definition}
For example if $q=3$ and $x =120010$, the runs of $x$ are $1, 2, 00, 1, 0$ and $r(x)=5$. Clearly for any $x \in \Fbb_q^n$, $1\leq r(x) \leq n$.

\begin{definition}
For any string $x \in\Fbb^{*}_q$, the set of subsequences \add{of} $x$ obtained by deletion of $s$ symbols is denoted by $D_s(x)$ and set of supersequences obtained by insertion of $s$ symbols into $x$ is denoted by $I_{s}(x)$. We call $D_s(x)$ and $I_s(x)$ the $s$-\textit{deletion set} of $x$ and $s$-\textit{insertion set} of $x$, respectively.  
\end{definition}
For example if $q=3, s=1$ and $x=120010$, then $D_1(x) = \{20010, 10010, 12010, 12000, 12001\}$. Notice that subsequences obtained by the 
deletion of a symbol from the same run of $x$ are all identical. For example,  in the run $00$, deletion of either $0$ results in the same subsequence $12010$. 
Consequently we have the following relation~\cite{levenshtein92perfect}, 
\begin{equation}
|D_1(x)| = r(x), \qquad \forall x \in \Fbb_q^*. \label{eq:delsize}
\end{equation}
For $s>1$, expressions for $|D_s(x)|$ get increasingly complicated, and depend on statistics of $x$ other than the number of runs (see, \eg,~\cite{mercier08number} for one set of expressions). We discuss bounds on $|D_s(\cdot)|$ later in Section \ref{sec:larger}.

Surprisingly, the size of $I_s(x)$ is independent of $x$, but is a function only of the length of $x$ and the size of the alphabet~\cite[Lemma 1, p.\, 354]{sankoff83time}. \note{need to fix this ref acc Navin's suggestion} Specifically, we have
\begin{eqnarray}
|I_{s}(x)|  &=& \sum_{j=0}^s\binom{n}{j}(q-1)^j  \qquad \forall\ x \in \Fbb_q^{n-s}.\label{eq:inssize} 
\end{eqnarray}
We denote this quantity by $\iota_{q,s,n}$,
\begin{equation}
\iota_{q,s,n} \triangleq \sum_{j=0}^s\binom{n}{j}(q-1)^j. \label{eq:iqn}
\end{equation}
As a general rule, instead of using `$1$-deletion' or `$1$-insertion' (correcting code, set,$\hdots$), we use the more elegant `\textit{single}-deletion' (correcting code, set, $\hdots$) etc.

The central object of our interest, namely, a deletion correcting code is defined below.
\begin{definition} \label{def:delcode}
A $s$-deletion correcting code (or ``$s$-deletion codebook'') for string length $n$ and alphabet $\Fbb_q$ is a set $C \subseteq \Fbb_q^n$ with the property 
that the sets $D_s(x), x\in C$, are pairwise disjoint. The largest such code is denoted by $\Cscr^*_{q,s,n}$ and called an optimal $s$-deletion correcting code or optimal $s$-deletion codebook. 
\end{definition}

A code capable of correcting $s$ deletions is also capable of correcting a total of $s$ insertions and deletions~\cite{levenshtein66binary}, whereby an $s$-deletion correcting code is also a $s$-insertion correcting code (\ie, a set $C\subseteq \Fbb_q^n$ such that the sets $I_s(x), x\in C$, are pairwise disjoint)~\cite{levenshtein66binary}. Another characterization of single-deletion correcting codes is through 
the Levenshtein distance. 
\begin{definition} \label{def:edit}
For any $x,y \in \Fbb_q^*$ define the Levenshtein distance or edit distance $d(x,y)$ as minimum number of insertions 
or deletions required to obtain $x$ from $y$.  
\end{definition}
A set $C\subseteq \Fbb_q^n$ is a $s$-deletion correcting code  if and only if $d(x,y)>2s$ for any two distinct 
strings $x,y\in C$. In summary, we have the following equivalence~\cite{levenshtein66binary}.
\begin{lemma} \label{lem:insd}
For any $x,y \in \Fbb_q^n$, the following three statements are equivalent. 
\begin{enumerate}
 \item $d(x,y) \leq 2s,$ 
\item $D_s(x) \cap D_s(y) \neq \emptyset,$
\item $I_s(x) \cap I_s(y) \neq \emptyset.$
\end{enumerate}
\end{lemma}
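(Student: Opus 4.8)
The plan is to use the edit distance $d(\cdot,\cdot)$ as the hub and to prove $(1)\Leftrightarrow(2)$ and $(1)\Leftrightarrow(3)$ separately; the two arguments mirror each other under the symmetry between deletions and insertions. Throughout I rely on two elementary bookkeeping facts. First, since $x$ and $y$ both have length $n$, any sequence of insertions and deletions transforming $y$ into $x$ must use equal numbers of each operation, so an optimal edit script realizing $d(x,y)$ consists of $a$ deletions and $a$ insertions for some integer $a$, and $d(x,y)=2a$. Second, I use a \emph{nesting} property: if $w$ is a common subsequence of $x$ and $y$ of length $n-a$ with $a\le s$, then deleting $s-a$ further symbols from $w$ yields an element of $D_s(x)\cap D_s(y)$; dually, if $z$ is a common supersequence of $x$ and $y$ of length $n+a$ with $a\le s$, then inserting $s-a$ further symbols yields an element of $I_s(x)\cap I_s(y)$. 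This nesting step is the one point that requires care, since $D_s$ and $I_s$ are defined by deleting or inserting \emph{exactly} $s$ symbols, so I must pass from ``at most $s$'' to ``exactly $s$'' while preserving the common-subsequence (respectively common-supersequence) property.

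The easy directions are $(2)\Rightarrow(1)$ and $(3)\Rightarrow(1)$, each of which produces an explicit edit script. If $w\in D_s(x)\cap D_s(y)$, then $w$ has length $n-s$, and I transform $y$ into $x$ by deleting $s$ symbols to reach $w$ and then inserting $s$ symbols to reach $x$, giving $d(x,y)\le 2s$. Symmetrically, if $z\in I_s(x)\cap I_s(y)$, then $z$ has length $n+s$, and I transform $y$ into $x$ by inserting $s$ symbols to reach $z$ and then deleting $s$ symbols to reach $x$, again giving $d(x,y)\le 2s$.

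For the converse directions I start from an optimal indel edit script with $a$ deletions and $a$ insertions, where $a\le s$. For $(1)\Rightarrow(2)$, the symbols of $y$ that are not deleted form a subsequence $w$ of $y$ of length $n-a$; since the remaining operations on $w$ are pure insertions that build up $x$, this same $w$ is also a subsequence of $x$. Thus $w$ is a common subsequence of length $n-a\ge n-s$, and the nesting property places an element in $D_s(x)\cap D_s(y)$. For $(1)\Rightarrow(3)$, the script induces an alignment of $x$ and $y$ along the matched symbols $w$; merging $w$ with both the deleted symbols of $y$ and the inserted symbols of $x$, in their original positions, produces a string $z$ of length $(n-a)+a+a=n+a\le n+s$ that is a supersequence of both $x$ and $y$, and the nesting property (insertion side) places an element in $I_s(x)\cap I_s(y)$.

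I expect the nesting step to be the main obstacle, and the only part needing genuine verification: one must check that truncating a ``too long'' common subsequence (or padding a ``too short'' common supersequence) can always be carried out so that the result lies simultaneously in both deletion sets (respectively both insertion sets), rather than merely in one of them. Everything else is a matter of exhibiting the explicit edit scripts and the length bookkeeping above.
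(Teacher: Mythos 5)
Your proposal is correct, but note that the paper itself offers no proof of this lemma at all: it is stated as a known equivalence and attributed to Levenshtein's 1966 paper, so there is no in-paper argument to compare against. Your write-up is a legitimate self-contained proof. The two directions $(2)\Rightarrow(1)$ and $(3)\Rightarrow(1)$ are exactly the explicit edit scripts you describe, and the "nesting" step you worry about is in fact the easy part: since $D_s(x)$ is precisely the set of subsequences of $x$ of length $n-s$ (and $I_s(x)$ the set of supersequences of length $n+s$), any subsequence of a common subsequence of length $n-a$ obtained by removing $s-a$ further symbols is automatically a common subsequence of length $n-s$, hence lies in both deletion sets; dually for insertions. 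The step that actually deserves a sentence of justification is the one you pass over quickly: to claim that the undeleted symbols of $y$ form a common subsequence $w$ of $x$ and $y$, you should first normalize the optimal script so that no inserted symbol is later deleted (otherwise the "undeleted symbols of $y$" need not be the matched symbols). This is immediate from optimality --- cancelling an insert/delete pair on the same symbol shortens the script --- and with that normalization $d(x,y)=2a$ where $n-a$ is the length of a longest common subsequence, after which both converse directions go through exactly as you outline. The common-supersequence construction for $(1)\Rightarrow(3)$ (interleaving the unmatched symbols of $x$ and $y$ between consecutive matched symbols) also works for any consistent choice of interleaving order, so there is no genuine gap.
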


The following lemma, although not directly related to deletion correction, will be required for our analysis.
\begin{lemma} \label{lem:eqsol} Let $n,k,d \in \Nbb, k\leq n, dk \leq n$ and let $t_1,\hdots,t_k$ be variables taking values in $\Nbb$. The number of solutions 
$(t_1,\hdots,t_k)$ to the set of equations 
\begin{equation}
\sum_{i=1}^k t_i = n, \qquad t_i \geq d, t_i \in \Nbb, \forall \ 1 \leq i \leq n, \label{eq:sys}
\end{equation}
is $\binom{n-k(d-1)-1}{k-1}.$
\end{lemma}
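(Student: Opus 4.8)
The plan is to prove Lemma \ref{lem:eqsol} by a change of variables that reduces the lower-bounded counting problem to the standard stars-and-bars count of weak compositions. The statement asks for the number of solutions in $\Nbb$ to $\sum_{i=1}^k t_i = n$ subject to $t_i \geq d$ for all $i$. The essential idea is that the lower bound $d$ on each variable contributes a fixed amount $kd$ to the sum, and the remaining freedom is exactly a composition problem with no nontrivial lower bound.

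First I would substitute $u_i = t_i - d$ for each $i$, so that $u_i \geq 0$, \ie $u_i \in \Nbb$ (under the convention that $\Nbb$ includes a suitable floor; I would be careful here to match the paper's convention for $\Nbb$ and adjust the offset by one accordingly). Under this substitution the constraint $\sum_{i=1}^k t_i = n$ becomes $\sum_{i=1}^k u_i = n - kd$. Thus the number of solutions $(t_1,\dots,t_k)$ equals the number of ways to write $n-kd$ as an ordered sum of $k$ nonnegative integers. Note that $n - kd \geq 0$ is guaranteed by the hypothesis $dk \leq n$, so the reduced problem is well posed.

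Next I would invoke the standard stars-and-bars formula: the number of weak compositions of a nonnegative integer $m$ into $k$ parts is $\binom{m+k-1}{k-1}$. Setting $m = n - kd$ gives $\binom{n-kd+k-1}{k-1} = \binom{n-k(d-1)-1}{k-1}$, which is precisely the claimed count after collecting terms in the upper argument: $n - kd + k - 1 = n - k(d-1) - 1$. This matches the statement of the lemma.

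The only genuine subtlety — and the main point requiring care rather than difficulty — is reconciling the indexing convention for $\Nbb$ and the sense in which $t_i \geq d$ and $u_i \geq 0$ are interpreted, since the paper writes variables taking values in $\Nbb$ and imposes $t_i \geq d$; depending on whether $\Nbb$ starts at $0$ or $1$, the substitution and the resulting binomial argument shift by one. Once the convention is pinned down so that the reduced variables range over the nonnegative integers, the stars-and-bars argument applies verbatim and the computation of the binomial coefficient's upper index is a one-line simplification. No deeper obstacle arises; the proof is essentially a single well-chosen affine change of variables followed by a classical counting identity.
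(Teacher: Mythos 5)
Your proof is correct and is essentially the same as the paper's: both perform an affine shift of the variables and then invoke a stars-and-bars count, the only cosmetic difference being that you shift by $d$ to reduce to weak compositions ($\binom{m+k-1}{k-1}$) while the paper shifts by $d-1$ to reduce to its separately proved $d=1$ base case of compositions into positive parts ($\binom{m-1}{k-1}$). Your handling of the $\Nbb$ convention is fine, since the explicit constraint $t_i \geq d$ makes the substitution $u_i = t_i - d$ land in the nonnegative integers regardless.
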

\begin{proof}
First suppose $d=1$. Consider an array of $n$ $1$'s and insert $k-1$ $0$'s between the $1$'s, so that no two $0$'s are inserted next to each other and no $0$'s are inserted at the beginning  or the end of the array. There is a one-to-one correspondence between an arrangement of this kind and a solution of \eqref{eq:sys}: $t_i$, for $1<i<k$, corresponds to the number of 1's between the $(i-1)\th$ $0$ and $i\th$ $0$ and $t_1,t_k$ are the number of 1's at the beginning and the end of the array. The number of such arrangements is easily seen to be $\binom{n-1}{k-1}$.

Now suppose $d>1$. Notice that the system \eqref{eq:sys} is equivalent to the system
\begin{align*}
\sum_{i=1}^k (t_i -(d-1))&= n-k(d-1), \\ 
 (t_i-(d-1)) &\geq 1, t_i -(d-1)\in \Nbb, \forall \ 1 \leq i \leq n. 
\end{align*}
This system reduces to the earlier case with $d=1$, but with variables $t_i' = t_i-(d-1)$, for $i=1,\hdots,k$. The number of solutions in this case is $\binom{n-k(d-1)-1}{k-1}.$
\end{proof}
\subsection{Background on hypergraphs}\label{sec:back}
The contents of this section are sourced from Berge~\cite{berge89hypergraphs}. 

A hypergraph is a generalization of the concept of a graph. In a graph edges are pairs of vertices. In a hypergraph, one allows arbitrary nonempty sets of vertices, including those with exactly one element, to be the \change{so called}{so-called} \textit{hyperedges}. Formally, 
\begin{definition}
A hypergraph $\Hscr$ is a tuple $(X,\Escr)$, where $X$ is a finite set and $\Escr$ is a collection of nonempty subsets of $X$ such that $\bigcup_{E \in \Escr} E=X$. 
$X$ is called the \textit{vertex set}, its elements are called \textit{vertices} and the elements of $\Escr$ are called \textit{hyperedges}. 
\end{definition}
When a vertex belongs to a hyperedge, we say it is \textit{covered} by the hyperedge. The above definition assumes that the hypergraph contains no exposed  vertex, \ie, a vertex that is covered by no hyperedge. This is a matter of convention; other definitions, \eg~\cite{scheinerman01fractional},  do not impose this requirement. 

Let $\Escr = \{E_1,\hdots,E_m\}$ be the set of hyperedges of the hypergraph $\Hscr=(X,\Escr)$. For a set of indices $J \subseteq \{1,\hdots,m\}$, the  \textit{partial hypergraph generated by $J$} is $\Hscr_J =(X_J, \{E_j | j \in J\}),$ where $X_J=\bigcup_{j\in J} E_j$. 
%For a set $A \subseteq X,$ the hypergraph 
%$\Hscr_{|A} = (A,\{E \cap A | E \in \Escr\})$ is called the \textit{sub-hypergraph induced by $A$}.  

Hyperedges are defined as sets and as such one can talk of intersection of hyperedges. Specifically, two hyperedges are disjoint if there is no vertex that is covered by both hyperedges. The idea of packing neighborhoods or spheres used in coding theory sits naturally in the theory of hypergraphs. A packing of hyperedges is called a matching.
\begin{definition}
A matching of a hypergraph $\Hscr=(X,\Escr)$ is a collection of pairwise disjoint hyperedges $E_1,\hdots, E_j \in \Escr$. The matching number of  $\Hscr$, denoted $\nu(\Hscr)$, is the largest $j$ for which such a matching exists. 
\end{definition}
A dual concept (in a sense we make precise below) of a matching is a transversal.
\begin{definition}
A transversal of a hypergraph $\Hscr=(X,\Escr)$ is a subset $T \subset X$ that intersects every hyperedge in $\Escr$. The transversal number of $\Hscr$, denoted $\tau(\Hscr)$, is the smallest size of a transversal. 
\end{definition}

%\begin{definition}
%Let $\Hscr=(X,\Escr)$ be a hypergraph where $X=\{x_1,\hdots,x_{|X|}\}$ and $\Escr = \{E_1,\hdots,E_{|\Escr|}\}$. The dual of $\Hscr$, denoted $\Hscr^*$, is a hypergraph whose vertices $e_1,\hdots,e_{|\Escr|}$ are the hyperedges of $\Hscr$ and with hyperedges 
%\[  Y_i = \{ e_j | x_i \in E_j\ {\rm in} \ \Hscr \}  \]
%\end{definition}
%
%\begin{lemma}
%$\nu(\Hscr^*) = p(\Hscr)$ and $\tau(\Hscr^*) = \kappa(\Hscr)$. 
%\end{lemma}

Suppose $\Hscr=(X,\Escr)$ is a hypergraph with $n$ vertices $x_1,\hdots,x_{n}$ and $m$ hyperedges $E_1,\hdots,E_{m}$. Consider a matrix 
$A \in \{0,1\}^{n \times m}$, where the element in the $i\th$ row and $j\th$ column is 
\[A[i,j] = \begin{cases}
               1 & \eef x_i \in E_j,  \\
0                & \ow.
              \end{cases}
    \]
$A$ is called the incidence matrix of $\Hscr$. The matching number and the transversal number are both solutions of integer linear programs. 
In the rest of this paper, we refer to problem \eqref{eq:matprob} below as the \textit{matching problem} and \eqref{eq:tranprob} as the \textit{transversal problem} on hypergraph $\Hscr$. 
\begin{lemma}
The matching number and transversal number are solutions of integer linear programs: 
\begin{align}
 \nu(\Hscr)  = \max \{ \bfone\t z  |\ Az \leq \bfone, z_j \in  \{0,1\}, 1\leq j\leq m\}, \label{eq:matprob}\\
\tau(\Hscr) = \min \{ \bfone\t w | A\t w \geq \bfone, w_i \in  \{0,1\}, 1\leq i \leq n\}, \label{eq:tranprob}
\end{align}
where $\bfone$ denotes a column vector of all $1$'s of appropriate dimension.
\end{lemma}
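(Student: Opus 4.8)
The plan is to prove each equality by exhibiting a bijection between the feasible integer points of the program and the relevant combinatorial objects --- matchings for \eqref{eq:matprob}, transversals for \eqref{eq:tranprob} --- under which the linear objective counts exactly the cardinality being optimized. Since the collections of matchings and of transversals are finite and nonempty, the optima of the two integer programs are attained and must coincide with $\nu(\Hscr)$ and $\tau(\Hscr)$ respectively. There is no analytic difficulty here; the lemma is a direct dictionary between the set-theoretic definitions and the algebraic form of the constraints.

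For the matching identity \eqref{eq:matprob}, I would identify each $z\in\{0,1\}^m$ with the sub-collection $\Escr_z=\{E_j : z_j=1\}$ of hyperedges, so that $\bfone\t z=|\Escr_z|$. The $i\th$ row of the constraint $Az\leq\bfone$ reads $\sum_{j=1}^m A[i,j]z_j\leq 1$; because $A[i,j]=1$ exactly when $x_i\in E_j$, the left-hand side counts the number of selected hyperedges that cover $x_i$. Hence the full system $Az\leq\bfone$ holds if and only if every vertex is covered by at most one selected hyperedge, which is precisely the statement that the hyperedges in $\Escr_z$ are pairwise disjoint, i.e.\ that $\Escr_z$ is a matching. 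Conversely, the indicator vector of any matching is feasible. The feasible integer points are therefore exactly the indicators of matchings, and maximizing $\bfone\t z$ returns the largest matching size $\nu(\Hscr)$.

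For the transversal identity \eqref{eq:tranprob}, the argument is the dual one: identify each $w\in\{0,1\}^n$ with the vertex set $T_w=\{x_i : w_i=1\}$, so that $\bfone\t w=|T_w|$. The $j\th$ row of $A\t w\geq\bfone$ reads $\sum_{i=1}^n A[i,j]w_i\geq 1$, and since $A[i,j]=1$ exactly when $x_i\in E_j$, this asserts that $T_w$ contains at least one vertex of $E_j$. Imposing this for every $j$ is exactly the condition that $T_w$ meets every hyperedge, i.e.\ that $T_w$ is a transversal; conversely, every transversal yields a feasible $w$ via its indicator. Minimizing $\bfone\t w$ over feasible integer points thus returns the smallest transversal size $\tau(\Hscr)$.

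The entire content is the translation between the combinatorial and algebraic descriptions, so the only points requiring care are confirming the direction of each inequality --- that $Az\leq\bfone$ encodes the \emph{packing} (disjointness) requirement while $A\t w\geq\bfone$ encodes the \emph{covering} (hitting) requirement --- and observing that restricting the variables to $\{0,1\}$, rather than relaxing them to $[0,1]$, is what makes $\bfone\t z$ and $\bfone\t w$ count genuine cardinalities and guarantees that the optima equal $\nu(\Hscr)$ and $\tau(\Hscr)$ rather than their fractional relaxations. I expect no substantive obstacle; the verification is routine once the indicator correspondences are written down.
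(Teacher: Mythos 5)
Your proof is correct and follows essentially the same route as the paper's: both interpret feasible $0$-$1$ vectors as indicator vectors of hyperedge sub-collections (resp.\ vertex subsets), observe that the constraints $Az\leq\bfone$ and $A\t w\geq\bfone$ encode pairwise disjointness and the hitting property respectively, and conclude that the optima count the largest matching and smallest transversal. No discrepancies to report.
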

\begin{proof}
In the integer linear programming formulation of the matching problem, each hyperedge $E_j \in \Escr$ corresponds to a variable $z_j \in \{0,1\}$ and $z$ is the vector $(z_1,\hdots,z_m)$. The variable $z_j$ is interpreted as the indicator function that identifies if hyperedge $E_j$ is a part of the matching represented by $z$. Thus $z_j=1$ if $E_j$ is selected, and $z_j=0$ otherwise.  The matching problem has one constraint for each vertex: for a vertex $x_i$, the sum of $z_j$ over those hyperedges $j$ that cover vertex $x_i$ is at most $1$; hence, at most one of  these $z_j$ takes value $1$. Consequently, a vector $z$ is feasible for the matching problem if and only if the collection $\{E_j:z_j =1\}$ is a matching of $\Hscr$. It follows that the matching number of $\Hscr$ is the optimal value of \eqref{eq:matprob}. 

By a similar construction, in the integer linear programming formulation of the transversal problem, let each vertex $x_i\in X$ correspond to a variable $w_i\in \{0,1\}$ and let $w=(w_1,\hdots,w_n)$. The variable $w_i=1$ if and only if vertex $x_i$ is included in the transversal represented by $w$. The transversal problem has one constraint for each hyperedge which says that for a hyperedge $E_j$, the sum of $w_i$  over those vertices $i$ that are covered by $E_j$ is at least $1$, whereby at least one \add{of} these $w_i$ \remove{is} takes value $1$. There is thus a one-to-one correspondence between a transversal of $\Hscr$ and a feasible vector $w$ for \eqref{eq:tranprob}. The transversal number is thus characterized by \eqref{eq:tranprob}.
\end{proof}

Notice that the mathematical programs in \eqref{eq:matprob} and \eqref{eq:tranprob} are duals of each other. A fundamental theorem of integer linear programming states that a pair of dual programs satisfy \textit{weak duality}. Weak duality means that of the pair of dual problems, the value of the maximization problem is no greater than the value of the minimization problem~\cite{schrijver98theory}. Applied to \eqref{eq:matprob}-\eqref{eq:tranprob}, this implies, for any hypergraph $\Hscr$,
\begin{equation}
\nu(\Hscr) \leq \tau(\Hscr). \label{eq:weak}
\end{equation}

We note a technical point about problems \eqref{eq:matprob}-\eqref{eq:tranprob} that helps in simplifying our analysis. Notice that the constraint $z_j \in\{0, 1\}$ in \eqref{eq:matprob} and the constraint $w_i \in \{0,1\}$ in \eqref{eq:tranprob} may as well be replaced with the constraints $z_j \in \Zbb_+$ and $w_i \in \Zbb_+$, respectively, where $\Zbb_+$ is the set of nonnegative integers, to give the following equivalent characterizations for $\nu(\Hscr)$ and $\tau(\Hscr)$
\begin{align}
 \nu(\Hscr)  &= \max \{ \bfone\t z  |\ Az \leq \bfone, z_j \in  \Zbb_+, 1\leq j\leq m\}, \label{eq:matprob2}\\
\tau(\Hscr) &= \min \{ \bfone\t w | A\t w \geq \bfone, w_i \in  \Zbb_+, 1\leq i \leq n\}. \label{eq:tranprob2}
\end{align}
To see the equivalence between \eqref{eq:matprob} and \eqref{eq:matprob2}, notice that no vector $z \in \Zbb_+^m$ satisfying $Az \leq \bfone$ can have a component greater than $1$. And in \eqref{eq:tranprob}, observe that no minimizing $w \in \Zbb_+^n$ of \eqref{eq:tranprob2} can have 
a component greater than $1$. From now on, we consider only the formulations \eqref{eq:matprob2}-\eqref{eq:tranprob2}. Note that sources such as Berge~\cite{berge89hypergraphs} omit the above analysis and directly employ \eqref{eq:matprob2}-\eqref{eq:tranprob2} to define $\nu(\Hscr)$ and $\tau(\Hscr)$.

The linear programming relaxation of an integer program is constructed by replacing the requirement that a variable \change{take}{takes} only integral values  by a requirement that allows the variable to also take any real value between the integral values (\ie, in the convex hull of the integral values)~\cite{schrijver98theory}. 
By $\nu^*(\Hscr)$ and $\tau^*(\Hscr)$ we denote the values of the linear programming relaxations of \eqref{eq:matprob2} and \eqref{eq:tranprob2}, respectively. \ie,  
\begin{align}
 \nu^*(\Hscr)  &= \max \{ \bfone\t z  |\ Az \leq \bfone, z \geq 0\}, \label{eq:frmatprob}\\
\tau^*(\Hscr) &= \min \{ \bfone\t w | A\t w \geq \bfone, w\geq 0 \}, \label{eq:frtranprob}
\end{align}
where for simplicity, we denote a vector of zeros of appropriate size also by `$0$'. 
$\nu^*(\Hscr)$ and $\tau^*(\Hscr)$ are called the \textit{fractional matching number} and \textit{fractional transversal number} of $\Hscr$.
A vector $z$ feasible for \eqref{eq:frmatprob} is called a \textit{fractional matching} and the set $\{z: Az \leq \bfone, z \geq 0\}$ is called the \textit{fractional matching polytope} of $\Hscr$. A vector $w$ feasible for \eqref{eq:frtranprob} is called a \textit{fractional transversal} and the set $\{w: A\t w \geq \bfone, w \geq 0\}$ is called the \textit{fractional transversal polytope}. $\bfone\t z$ and $\bfone \t w$ are called the \textit{weights} of $z$ and $w$. $\nu^*(\Hscr)$ and $\tau^*(\Hscr)$ being linear programs satisfy the fundamental property of  \textit{strong duality}~\cite{schrijver98theory}, \ie,  
\[\nu^*(\Hscr) =\tau^*(\Hscr).\] 
Thus for any hypergraph the fractional matching number and the fractional transversal number are equal. In general, integer programs do not satisfy strong duality and thereby equality may not hold in \eqref{eq:weak}. Equality or lack thereof in \eqref{eq:weak} depends on the shape of the fractional matching and fractional transversal polytopes.  On a side note, we recall that linear programming relaxations have been employed in the decoding of binary linear codes by Feldman \etal~\cite{feldman05using}.

Fractional matchings and transversals do not have as direct a counting interpretation as the vectors feasible for \eqref{eq:matprob}-\eqref{eq:tranprob}. However they are extremely useful for obtaining bounds. Since the feasible regions of the integer programs are strictly contained in the feasible regions of their of the linear programming relaxations, we immediately have $\nu(\Hscr) \leq \nu^*(\Hscr)$ and $\tau^*(\Hscr) \leq \tau(\Hscr)$. Furthermore, we have the following lemma.
\begin{lemma} \label{lem:duality}
For any hypergraph $\Hscr$, we have 
\begin{align*}
 \nu(\Hscr) \leq \nu^*(\Hscr) = \tau^*(\Hscr) \leq \tau(\Hscr).
\end{align*}
In particular, 
\[\nu(\Hscr) \leq \tau^*(\Hscr) \leq \bfone \t w,\]
for any fractional transversal $w$. 
\end{lemma}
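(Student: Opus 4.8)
The plan is to assemble the chain of inequalities from three ingredients, each already essentially in place in the preceding discussion: the containment of the integer feasible regions in their linear programming relaxations, the strong duality of the pair of relaxed linear programs, and the definition of $\tau^*(\Hscr)$ as a minimum. No new idea is required; the lemma is a bookkeeping result that records the sandwich relating the four quantities, so I expect no substantive obstacle.

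First I would establish the two outer inequalities by comparing feasible regions. The feasible set $\{z : Az \leq \bfone,\ z \in \Zbb_+^m\}$ of the integer matching program \eqref{eq:matprob2} is contained in the feasible set $\{z : Az \leq \bfone,\ z \geq 0\}$ of its relaxation \eqref{eq:frmatprob}, and both programs maximize the same objective $\bfone\t z$; maximizing over a larger set can only increase the optimal value, giving $\nu(\Hscr) \leq \nu^*(\Hscr)$. Symmetrically, the integer transversal program \eqref{eq:tranprob2} has feasible region contained in that of \eqref{eq:frtranprob}, and both minimize $\bfone\t w$; minimizing over a larger set can only decrease the optimal value, giving $\tau^*(\Hscr) \leq \tau(\Hscr)$. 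These are exactly the two containment inequalities already noted in the text.

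Next I would invoke strong duality for the pair of linear programs \eqref{eq:frmatprob} and \eqref{eq:frtranprob}, which are duals of one another just as \eqref{eq:matprob2}--\eqref{eq:tranprob2} are. Since both relaxations are feasible and bounded, strong duality yields $\nu^*(\Hscr) = \tau^*(\Hscr)$, closing the middle link and producing the full chain $\nu(\Hscr) \leq \nu^*(\Hscr) = \tau^*(\Hscr) \leq \tau(\Hscr)$. Finally, for the ``in particular'' claim I would use the definition of $\tau^*(\Hscr)$ as the minimum weight of a fractional transversal: any $w$ satisfying $A\t w \geq \bfone$ and $w \geq 0$ is feasible for \eqref{eq:frtranprob}, so its weight $\bfone\t w$ is at least the optimal value $\tau^*(\Hscr)$; combining this with $\nu(\Hscr) \leq \tau^*(\Hscr)$ from the chain above gives $\nu(\Hscr) \leq \tau^*(\Hscr) \leq \bfone\t w$. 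The only point demanding care is that the central equality genuinely requires strong duality of linear programs rather than the mere weak duality \eqref{eq:weak} available for the integer programs, so I would be sure to cite the former explicitly.
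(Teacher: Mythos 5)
Your proposal is correct and follows exactly the paper's own argument: the outer inequalities come from the relaxation of the integer programs, the middle equality from LP strong duality, and the final claim from the definition of $\tau^*(\Hscr)$ as a minimum over fractional transversals. The only difference is that you spell out the feasible-region containments and the feasibility/boundedness hypothesis for strong duality more explicitly, which the paper leaves implicit.
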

\begin{proof}
Since fractional matchings and transversal problems are relaxations of the matching and transversal problem, $\nu(\Hscr) \leq\nu^*(\Hscr)$ and 
$\tau^*(\Hscr) \leq \tau(\Hscr)$. By the duality theorem of linear programming $\nu^*(\Hscr)$ and $\tau^*(\Hscr)$ are equal. By definition, any fractional transversal $w$ must have weight no less than the fractional transversal number, by which the last claim follows.
\end{proof}

We end this survey with one final concept, that of a line graph. 
\begin{definition}
A line graph of a hypergraph $\Hscr=(X,\Escr)$ is a graph $L(\Hscr)$ with vertices given by the hyperedges of $\Hscr$ and 
two vertices in $L(\Hscr)$ are joined by an edge if they intersect as hyperedges in $\Hscr$.
\end{definition}
An independent set of a graph is a set of vertices, no two of which share an edge. For a graph $G$ we denote the size of its largest independent set, or its \textit{independence number}, by $\alpha(G)$. Now consider a hypergraph $\Hscr$. An independent set of its line graph $L(\Hscr)$ corresponds to a collection of hyperedges of $\Hscr$ that are pairwise disjoint. Consequently, 
\begin{equation}
\nu(\Hscr) = \alpha(L(\Hscr)), \label{eq:line}
\end{equation}
\ie, the matching number of a hypergraph equals the independence number of its line graph.

\section{Non-asymptotic upper bounds for single-deletion correcting codes} \label{sec:bounds}
\subsection{Hypergraph characterization} \label{sec:hyperchar}
\def\Hdqn{\Hscr^\Dsf_{q,1,n}}
\def\Hdqsn{\Hscr^\Dsf_{q,s,n}}
\def\Hcqn{\Hscr^\Csf_{s,n}}
\def\Hiqn{\Hscr^\Isf_{q,1,n}}
\def\Hiqsn{\Hscr^\Isf_{q,s,n}}
\def\Hi1n{\Hscr^\Isf_{1,n}}
\def\Hnsn{\Hscr^\Nsf_{s,n}}
\def\Adsn{A^\Dsf_{s,n}}
\def\Acsn{A^\Csf_{s,n}}
\def\Aisn{A^\Isf_{s,n}}
\def\Ai1n{A^\Isf_{1,n}}
\def\Ansn{A^\Nsf_{s,n}}

\def\Ainm{A^\Isf_{1,n-1}}
\def\Pnmneg{P^{n-1}_\neg}
\def\Pn2neg{P^{n-2}_\neg}
\def\Pnneg{P^{n}_\neg}
The contents of this subsection apply to any $s$ number of deletions. We will specialize to single-deletions and present our bounds in the following subsection. 

Consider the following hypergraphs.
\begin{align*}
\Hdqsn &= (\Fbb^{n-s}_q, \{D_s(x) | x \in \Fbb_q^{n} \}),\\
 \Hiqsn &= (\Fbb_q^{n+s}, \{I_s(x) | x \in \Fbb_q^{n} \}).
\end{align*}
In each of these hypergraphs, hyperedges correspond to strings in $\Fbb_q^n$ and the vertices are strings in $\Fbb_q^{n-s}$ and $\Fbb_q^{n+s}$ for $\Hdqsn$ and $\Hiqsn$, respectively. By Definition \ref{def:delcode}, an $s$-deletion correcting code in $\Fbb_q^n$ corresponds to disjoint hyperedges in $\Hdqsn$ and therefore corresponds to a \textit{matching} in $\Hdqsn$. The size of the largest codebook for string length $n$, $|\Cscr^*_{q,s,n}|$ is thus equal to $\nu(\Hdqsn)$, the matching number of $\Hdqsn$. The matching problem for $\Hdqsn$ when written explicitly, is as follows,
$$ \maxproblemsmall{$|\Cscr^*_{q,s,n}|=\ $}
        {z}
        {\sum_{y \in \Fbb_q^n} z(y)}
                                 {\hspace{-1cm}\begin{array}{r@{\ }c@{\ }ll}
\sum_{y \in I_s(x)} z(y) &\leq & 1, & \forall  x\in \Fbb_q^{n-s},\\
z(y)                                    &\in & \Zbb_+,  & \forall y \in \Fbb_q^{n}.                              
        \end{array}} 
$$
Here the  integer variables are denoted $z(y), y \in \Fbb_q^n$. The constraints are that for each vertex $x\in \Fbb_q^{n-s}$, the sum of $z(y)$ over those $y$ for which the hyperedge corresponding to $y$ covers $x$ (\ie, $y \in I_s(x)$) is at most unity. 
Since a code is an $s$-deletion correcting code if and only if it is an $s$-\textit{insertion} correcting code, a matching of $\Hiqsn$ also corresponds to a $s$-deletion correcting code and thereby, $\nu(\Hiqsn) = |\Cscr^*_{q,s,n}|$. 

Another characterization of the optimal codebook adopted in~\cite{cullina12coloring,khajouei11algorithmic,sloane02single} employs the following graph.
\begin{definition}
Let $L_{q,s,n}$ be the graph with vertex set $\Fbb_q^n$ wherein two vertices are adjacent if their Levenshtein distance  is at most $2s$.
\end{definition}
 The optimal $s$-deletion codebook corresponds to the maximum independent set in this graph. The Levenshtein distance (restricted to $\Fbb_q^n \times \Fbb_q^n$) is the shortest path metric on the graph $L_{q,1,n}$. The hypergraph characterization relates to this characterization through the concept of a line graph. Specifically, 
\begin{lemma} \label{lem:line} For any $q,s,n \in \Nbb$, the graph $L_{q,s,n}$ is the line graph of hypergraph $\Hdqsn$ and of hypergraph $\Hiqsn$. 
Consequently, 
\begin{align*}
\nu(\Hdqsn) &= \alpha(L_{q,s,n})=|\Cscr^*_{q,s,n}|, \\
\nu(\Hiqsn) &= \alpha(L_{q,s,n})=|\Cscr^*_{q,s,n}|.
\end{align*}
\end{lemma}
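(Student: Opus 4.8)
The plan is to read the claim as two pieces: first, that the graph $L_{q,s,n}$ literally coincides with the two line graphs $L(\Hdqsn)$ and $L(\Hiqsn)$; and second, that the stated equalities among $\nu$, $\alpha$, and $|\Cscr^*_{q,s,n}|$ then fall out by combining this coincidence with the identity \eqref{eq:line} and the characterization of $|\Cscr^*_{q,s,n}|$ as a matching number already noted above. I would establish the graph coincidence by treating the vertex set first and the edge set second.

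For the vertex set: by construction the hyperedges of $\Hdqsn$ (resp. $\Hiqsn$) are indexed by the strings $x \in \Fbb_q^n$, one hyperedge $D_s(x)$ (resp. $I_s(x)$) per string, exactly as the matching program above carries one variable $z(y)$ per $y \in \Fbb_q^n$. Hence the vertex set of each line graph is this indexed family of hyperedges, which I identify with $\Fbb_q^n$, the vertex set of $L_{q,s,n}$, via $x \leftrightarrow D_s(x)$ (resp. $x \leftrightarrow I_s(x)$). For the edge set: by the definition of a line graph, two vertices $x,x'$ are adjacent in $L(\Hdqsn)$ iff $D_s(x)\cap D_s(x')\neq\emptyset$, and adjacent in $L(\Hiqsn)$ iff $I_s(x)\cap I_s(x')\neq\emptyset$, whereas they are adjacent in $L_{q,s,n}$ iff $d(x,x')\leq 2s$. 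Lemma \ref{lem:insd} states precisely that these three conditions are equivalent, so all three graphs carry the same edges and therefore coincide.

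Given the coincidence of the graphs, the second piece is immediate: applying \eqref{eq:line} to $\Hdqsn$ gives $\nu(\Hdqsn)=\alpha(L(\Hdqsn))=\alpha(L_{q,s,n})$, and applying it to $\Hiqsn$ gives $\nu(\Hiqsn)=\alpha(L_{q,s,n})$; together with the already-established fact that $\nu(\Hdqsn)=\nu(\Hiqsn)=|\Cscr^*_{q,s,n}|$ this reproduces the full display.

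The one point I would be careful about, and which I expect to be the only real obstacle, is the vertex identification: whether $x\mapsto D_s(x)$ and $x\mapsto I_s(x)$ are genuinely injective, since distinct strings can share a deletion set (already for $q=2,s=1,n=2$ one has $D_1(01)=D_1(10)=\{0,1\}$). I would resolve this by insisting throughout that the hyperedges form an indexed family keyed by $\Fbb_q^n$, exactly as the integer program does; then two distinct strings with an identical deletion set simply furnish two distinct, adjacent vertices of the line graph (their hyperedges intersect, being equal), which is consistent precisely because such strings lie at Levenshtein distance $\leq 2s$ and so are adjacent in $L_{q,s,n}$ as well. I would also remark that this coincidence is harmless for the counting, since a matching may contain at most one of two hyperedges that intersect (in particular, that are equal) as sets, so $\nu$ is unaffected by the indexing convention and the chain of equalities stands.
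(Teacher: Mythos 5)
Your proof is correct and follows essentially the same route as the paper's: identify the edges of $L_{q,s,n}$ with intersecting hyperedges via Lemma \ref{lem:insd}, conclude $L_{q,s,n}=L(\Hdqsn)=L(\Hiqsn)$, and apply \eqref{eq:line} together with $\nu(\Hdqsn)=\nu(\Hiqsn)=|\Cscr^*_{q,s,n}|$. Your additional remark about treating the hyperedges as a family indexed by $\Fbb_q^n$ (since distinct strings can share a deletion set) addresses a genuine subtlety that the paper's one-line proof passes over silently, and you resolve it correctly.
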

\begin{proof}
By the Definition \ref{def:edit} of Levenshtein distance and by Lemma \ref{lem:insd}, two vertices in $L_{q,s,n}$ share an edge if and only if their $s$-deletion (and $s$-insertion) sets intersect. Consequently, $L_{q,s,n} = L(\Hdqsn)=L(\Hiqsn).$ By \eqref{eq:line}, the matching numbers of $\Hdqsn$ and $\Hiqsn$ are both equal to the independence  number of $L_{q,s,n}.$ 
\end{proof}

If one attempts to upper bound the size of a code by packing graph $L_{q,s,n}$ with non-overlapping neighborhoods centered around strings in $\Fbb_q^n$, the main difficulty encountered is that the resulting neighborhoods are not of the same size. This property of the Levenshtein distance is a fundamental departure from, say, the Hamming distance under which the sizes of the neighborhoods are same for every string. 

Alternatively, one may pack $\Fbb_q^{n-s}$ with deletion sets of strings in $\Fbb_q^n$. This approach too encounters the difficulty that deletion sets are of different sizes. 
For example for $s=1$, if one argues that 
$$|\Cscr^*_{q,1,n}|\min_{x \in \Fbb_q^n}|D_1(x)| \leq \sum_{x \in \Cscr^*_{q,1,n}} |D_1(x)| \leq q^{n-1},$$  since $\min_{x\in \Fbb_q^n}|D_1(x)|=1,$ one gets the bound $|\Cscr^*_{q,1,n}| \leq q^{n-1}$ which is far weaker than the asymptotic bound (the ratio $\frac{q^{n-1}}{q^n/n(q-1)}$ approaches infinity for large $n$). A similar situation results for $s>1$. Levenshtein's bound \eqref{eq:leven} is obtained by a refinement of this approach in which strings are classified in two categories based on their number of runs.

Since insertion-correction and deletion-correction are equivalent, and since insertion sets are of the same size for each string of a given length (cf., \eqref{eq:inssize}), one may exploit this to pack $\Fbb_q^{n+s}$ with insertion sets. Unfortunately, this leads to a weak upper bound. For example, for $s=1$ we get the bound $\frac{q^{n+1}}{n(q-1)+q}$, which is asymptotically $q$ times \add{larger} than the known upper bound (this bound is $\frac{2^{n+1}}{n+1}$ for binary alphabet and the asymptotic size is $\frac{2^n}{n}$). 

The approaches of packing deletion sets or insertion sets can be conceptually unified by casting them as matching problems on hypergraphs $\Hdqsn$ and $\Hiqsn$, respectively. Since insertion sets are of the same size, hypergraph $\Hiqsn$ is \textit{uniform} \cite{berge89hypergraphs}; indeed the matching problem is well studied on uniform hypergraphs (see \eg,~\cite[Chapter 3]{berge89hypergraphs},\cite{furedi81maximum} and~\cite{aharoni96theorem}). It is a quirk of the problem of deletion-correcting codes that although the characterization of $\Cscr^*_{q,s,n}$ via $\Hiqsn$ is analytically convenient and well studied, it leads to a weak bound.

The other hypergraph $\Hdqsn$ is \textit{regular}, since all vertices in $\Hdqsn$ have the same number of hyperedges covering them~\cite{berge89hypergraphs}. Although this hypergraph does not belong to a category where the matching problem appears to be well studied,  we show in the following sections that, if appropriately tackled, it does lead to a better bound. The crux of the proof of our bound lies in tackling this hypergraph.
\subsection{The non-asymptotic upper bounds for single-deletion correcting codes}
In this section we present bounds on single-deletion correcting codes. 
The bounds we obtain are based on two concepts. The first is a monotonicity relationship between the number of runs of a string (recall Definition \ref{def:run}) under the operation of insertion. The second is the property that the size of the deletion set is also equal to the number of runs (cf. \eqref{eq:delsize}). We first note the monotonicity. 
\begin{lemma} \label{lem:runs}
Let $q,n \in \Nbb$ and let  $x \in \Fbb_q^*$ be a string. Then for any supersequence $y \in I_1(x)$, the number of runs of $x$ and $y$ satisfy $r(x) \leq r(y)$. 
\end{lemma}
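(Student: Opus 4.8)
The plan is to reduce the statement to a simple counting identity for the number of runs. For a nonempty string $x = x_1\cdots x_n$, observe that $r(x) = 1 + |\{\, i : 1 \le i \le n-1,\ x_i \neq x_{i+1}\,\}|$; that is, beyond the first run, each new run is created at exactly one position where consecutive symbols differ. I would first record this characterization (it is immediate from Definition \ref{def:run}), which reduces the claim $r(x) \le r(y)$ to showing that inserting a single symbol never decreases the number of positions at which consecutive symbols differ.

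Next I would split into cases according to where the inserted symbol $a \in \Fbb_q$ lands. If $a$ is inserted strictly inside $x$, say between $x_i$ and $x_{i+1}$, then the single adjacency $(x_i, x_{i+1})$ is replaced by the two adjacencies $(x_i, a)$ and $(a, x_{i+1})$, while every other adjacency of $x$ is untouched. The key inequality is then $\I{x_i \neq a} + \I{a \neq x_{i+1}} \ge \I{x_i \neq x_{i+1}}$, which holds because whenever $x_i \neq x_{i+1}$ the symbol $a$ cannot coincide with both $x_i$ and $x_{i+1}$, so at least one of the two new adjacencies is itself a differing pair. If instead $a$ is inserted at the beginning or the end of $x$, then no existing adjacency is destroyed and at most one new adjacency, $(a,x_1)$ or $(x_n,a)$, is introduced, so the count of differing adjacencies can only stay the same or increase.

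Combining the cases, the number of differing adjacencies of $y$ is at least that of $x$, and hence $r(y) \ge r(x)$. The degenerate cases, $x$ empty or of length one, I would dispatch separately and trivially, since then $r(x) \le 1$ while $y$ is nonempty. I expect no genuine obstacle here: the only point requiring care is the bookkeeping at the string boundaries, to ensure the interior-insertion identity is not misapplied when $a$ is placed before $x_1$ or after $x_n$. An equivalent route, which I might adopt if I prefer to bypass the counting identity, is a direct case analysis on whether $a$ matches its left and/or right neighbor, tallying the resulting change in $r(\cdot)$ (which is $0$, $+1$, or $+2$ in every configuration); this yields more cases but is entirely mechanical.
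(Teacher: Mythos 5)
Your proof is correct and complete: the reduction of $r(\cdot)$ to counting differing adjacent pairs, the key inequality that an inserted symbol cannot agree with both of two distinct neighbors, and the separate treatment of boundary insertions together give exactly the claimed monotonicity. The paper itself omits the proof of this lemma, declaring it obvious, so there is nothing to compare against; your argument is the standard one the authors evidently had in mind, and it would serve as a valid proof if one were to be included.
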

This lemma is quite obvious; we omit the proof for brevity. 

% With the above lemma, we are ready to prove the bound. 
Our proof utilizes Lemma \ref{lem:duality}; for easy reference the fractional transversal problem of $\Hdqn$ is written below explicitly.
$$ \problemsmall{$\tau^*(\Hdqn)=\ $}
        {w}
        {\sum_{x \in \Fbb_q^{n-1}} w(x)}
                                 {\hspace{-1cm}\begin{array}{r@{\ }c@{\ }ll}
\sum_{x \in D_1(y)} w(x) &\geq & 1, & \forall  y\in \Fbb_q^{n},\\
w(x)                                    &\geq & 0,  & \forall x \in \Fbb_q^{n-1}.                              
        \end{array}} 
$$
Notice that the variables are $w(x), x \in \Fbb_q^{n-1}$ and the constraint is that for any $y \in \Fbb_q^n$, 
the sum of $w(x)$ over those $x$ that are covered by the hyperedge corresponding to $y$ (\ie, $x\in D_1(y)$), is at least unity. 
%\begin{theorem}\label{thm:binary}
%Let $n \in \Nbb, n\geq 2$. The optimal binary single-deletion correcting code, $\Cscr^*_{2,1,n}$ satisfies 
%$$|\Cscr^*_{2,1,n}| \leq \frac{2^n-2}{n-1}.$$
%\end{theorem}
%\begin{proof}
%\begin{align*}
%\sum_{x\in \Fbb_2^{n-1}}w(x) &= \sum_{r=1}^{n-1} 2 \binom{n-2}{r-1}.\frac{1}{r} \\
%& = 2\sum_{r=1}^{n-1} \frac{(n-2)!}{(n-r-1)!(r-1)!}.\frac{1}{r} \\ 
%&\buildrel{(c)}\over=\frac{2}{(n-1)}\sum_{r=1}^{n-1} \binom{n-1}{r} \\
%&=\frac{2(2^{n-1}-\binom{n-1}{0})}{n-1}=\frac{2^n-2}{n-1}.
%\end{align*}
%\end{proof}
\begin{theorem}\label{thm:qary}
Let $q,n \in \Nbb, q \geq 2, n\geq 2$. The optimal $q$-ary single-deletion correction code $\Cscr^*_{q,1,n}$ satisfies
$$|\Cscr^*_{q,1,n}|\leq \frac{q^n-q}{(q-1)(n-1)}.$$
\end{theorem}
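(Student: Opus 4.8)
The plan is to invoke Lemma~\ref{lem:duality}, which gives $|\Cscr^*_{q,1,n}| = \nu(\Hdqn) \le \tau^*(\Hdqn) \le \bfone\t w$ for \emph{any} fractional transversal $w$ of $\Hdqn$. It therefore suffices to exhibit a single feasible $w$ whose weight equals $\frac{q^n-q}{(q-1)(n-1)}$; the bound follows at once. The entire proof reduces to (i) guessing the right $w$, (ii) checking feasibility against the transversal constraints, and (iii) evaluating the weight of $w$ in closed form.

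The candidate I would use is the run-weighting $w(x) = 1/r(x)$ for every $x \in \Fbb_q^{n-1}$, which is manifestly nonnegative and well defined since $r(x)\ge 1$. To verify feasibility, fix $y \in \Fbb_q^n$ and examine the constraint $\sum_{x \in D_1(y)} w(x) \ge 1$. Two earlier facts do all the work. First, by \eqref{eq:delsize} the deletion set $D_1(y)$ has exactly $r(y)$ elements. Second, every $x \in D_1(y)$ satisfies $y \in I_1(x)$, so the monotonicity in Lemma~\ref{lem:runs} yields $r(x) \le r(y)$, whence $w(x) = 1/r(x) \ge 1/r(y)$. Summing the $r(y)$ terms gives $\sum_{x \in D_1(y)} 1/r(x) \ge r(y)\cdot\frac{1}{r(y)} = 1$, so $w$ is a fractional transversal. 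This step is pleasantly robust: it uses only that runs cannot increase under deletion, together with the exact count $|D_1(y)|=r(y)$, and needs no case analysis on whether a particular deletion lowers the run count by one or by two.

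It remains to compute $\bfone\t w = \sum_{x \in \Fbb_q^{n-1}} 1/r(x)$ by grouping strings according to their number of runs. A string in $\Fbb_q^{n-1}$ with exactly $k$ runs is determined by a composition of $n-1$ into $k$ positive parts, a choice of first symbol, and a choice of each later run-symbol distinct from its predecessor, giving $N(k) = \binom{n-2}{k-1}\,q\,(q-1)^{k-1}$ such strings. Hence $\bfone\t w = \sum_{k=1}^{n-1} \tfrac{1}{k}\binom{n-2}{k-1} q (q-1)^{k-1}$. Reindexing by $j=k-1$ and applying the absorption identity $\frac{1}{j+1}\binom{n-2}{j} = \frac{1}{n-1}\binom{n-1}{j+1}$ collapses the sum, and the binomial theorem then gives
\[
\bfone\t w = \frac{q}{(n-1)(q-1)}\Big((1+(q-1))^{n-1}-1\Big) = \frac{q^n-q}{(q-1)(n-1)},
\]
the claimed value.

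The conceptual obstacle is purely the discovery of the weighting $w=1/r(\cdot)$: once this choice is made, feasibility is a one-line consequence of Lemma~\ref{lem:runs} and \eqref{eq:delsize}, so the only genuine labor is the binomial evaluation above, hinging on the identity $\frac{1}{j+1}\binom{m}{j}=\frac{1}{m+1}\binom{m+1}{j+1}$. I would flag that the feasibility bound exploits $1/r(x)\ge 1/r(y)$ uniformly over $D_1(y)$, which is precisely what lets the argument sidestep the delicate combinatorics of how deletions merge runs. An alternative route through the uniform insertion hypergraph $\Hiqsn$ is available but, as the excerpt notes, yields a strictly weaker bound, so the \emph{regular} deletion hypergraph $\Hdqn$ is the correct object to dualize here.
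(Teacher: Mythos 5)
Your proposal is correct and follows exactly the paper's own argument: the same fractional transversal $w(x)=1/r(x)$ on $\Hdqn$, feasibility via Lemma~\ref{lem:runs} together with $|D_1(y)|=r(y)$ from \eqref{eq:delsize}, and the identical run-count $q(q-1)^{k-1}\binom{n-2}{k-1}$ with the absorption identity and binomial theorem to evaluate the weight. No differences worth noting.
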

 \begin{proof}
By Lemma \ref{lem:line}, the size of the largest single-deletion correcting code equals the matching number of hypergraph $\Hscr_{q,1,n}^\Dsf$, \ie, $\nu(\Hscr^\Dsf_{q,1,n}) =|\Cscr^*_{q,1,n}|$. By Lemma \ref{lem:duality}, to show the required upper bound on $\nu(\Hscr_{q,1,n}^\Dsf)$ it suffices to construct a fractional transversal of  
$\Hscr_{q,1,n}^\Dsf$ with weight equal to $\frac{q^n-q}{(q-1)(n-1)}.$ To this end, consider the fractional transversal $w$, where the component of $w$ corresponding to string $x\in \Fbb_q^{n-1}$, denoted $w(x)$, is given by
\[w(x) = \frac{1}{r(x)}, \qquad \forall \ x \in \Fbb_q^{n-1},\]
where $r(x)$ is the number of runs of $x$. 
Clearly, $w \geq 0$. To show that $w$ is indeed a fractional transversal, observe that for any $y\in \Fbb_q^n$,
\[\sum_{x\in D_1(y)} w(x) = \sum_{x\in D_1(y)}  \frac{1}{r(x)} \buildrel{(a)}\over\geq \frac{|D_1(y)|}{r(y)}\buildrel{(b)}\over=1.\]
The inequality in $(a)$ follows from monotonicity relationship claimed in Lemma \ref{lem:runs} and the equality in $(b)$ follows from the size of the deletion set, given in \eqref{eq:delsize}. It only remains to calculate the weight of this transversal. For this, note that the number of strings of length $n-1$ with exactly $r$ runs is $q(q-1)^{r-1}\times \binom{n-2}{r-1}$. This is because, we have $q$ choices for the symbol of the first run and for 
every subsequent run we have $q-1$ choices for its symbol. The number of choices for the lengths of the runs equals the number of integral solutions $(t_1,\hdots,t_r)$ to 
\[\sum_{i=1}^r t_i = n-1,\quad \quad  t_i \geq 1, 1\leq i\leq r,    \]
which, by Lemma \ref{lem:eqsol}, is $\binom{n-2}{r-1}$.  Consequently, the weight of $w$ is 
%  As in the proof of Theorem \ref{thm:binary}, we prove this bound by constructing a fractional transversal. Since Lemma \ref{lem:runs} and \eqref{eq:delsize} apply for any $q$-ary alphabet, the proof  of Theorem \ref{thm:binary} shows that, $w(x) = \frac{1}{r(x)}, x\in \Fbb_q^{n-1}$ is a fractional transversal for $\Hdqn$. We now calculate the weight of $w$, noting that the number of $q$-ary strings of length $n-1$ with $r$ runs is $q(q-1)^{r-1}\binom{n-2}{r-1}$. Therefore,
 \begin{align*}
\sum_{x\in \Fbb_q^{n-1}} w(x) &= \sum_{r=1}^{n-1}q(q-1)^{r-1}\binom{n-2}{r-1}.\frac{1}{r} \\
&=q\sum_{r=1}^{n-1}\frac{(n-2)!}{(n-r-1)!(r-1)!}.\frac{1}{r}.(q-1)^{r-1}  \\
&\buildrel{(c)}\over=\frac{q}{(q-1)(n-1)}\sum_{r=1}^{n-1}\binom{n-1}{r}(q-1)^r \\
&=\frac{q\left((1+(q-1))^{n-1}-\binom{n-1}{0}\right)}{(q-1)(n-1)}\\
&=\frac{q^n-q}{(q-1)(n-1)}.
\end{align*}
In $(c)$, we have simplified $\frac{(n-2)!}{(n-r-1)!(r-1)!}.\frac{1}{r} = \frac{1}{n-1}\frac{(n-1)!}{(n-r-1)!r!}.$
By Lemma \ref{lem:duality}, $\frac{q^n-q}{(q-1)(n-1)}$ is an upper bound on $|\Cscr^*_{q,1,n}|.$
 \end{proof}

Although this bound is non-asymptotic, as a corollary we get the asymptotic results of Levenshtein \cite{levenshtein66binary} and Tenengolts \cite{tenengolts84nonbinary}.  
\begin{corollary} The optimal single-deletion correcting code for binary alphabet has size that asymptotically satisfies
\[|\Cscr^*_{2,1,n}| \sim \frac{2^n}{n}.\] 
The optimal single-deletion correcting code for $q$-ary alphabet satisfies 
\[|\Cscr^*_{q,1,n}| \lesssim \frac{q^n}{(q-1)n}.\]
\end{corollary}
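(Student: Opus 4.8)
The plan is to derive both asymptotic statements directly from the non-asymptotic bound of Theorem~\ref{thm:qary}, using only elementary limit computations together with a known lower bound in the binary case. Recall that $f \lesssim g$ means $\lim_{n\to\infty} f(n)/g(n) \le 1$; so for each upper-bound claim it suffices to divide the bound of Theorem~\ref{thm:qary} by the target function and check that the resulting ratio tends to $1$, and for the binary lower bound it suffices to exhibit a code family whose size, divided into $2^n/n$, has limit superior at most $1$.

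For the $q$-ary upper bound, I would start from $|\Cscr^*_{q,1,n}| \le \frac{q^n-q}{(q-1)(n-1)}$ and compare against $g(n)=\frac{q^n}{(q-1)n}$. Forming the ratio gives
$$\frac{|\Cscr^*_{q,1,n}|}{g(n)} \le \frac{q^n-q}{(q-1)(n-1)}\cdot \frac{(q-1)n}{q^n} = \frac{n}{n-1}\bigl(1 - q^{1-n}\bigr),$$
and the right-hand side converges to $1$ as $n\to\infty$. Hence $\limsup_{n\to\infty} |\Cscr^*_{q,1,n}|/g(n) \le 1$, i.e.\ $|\Cscr^*_{q,1,n}| \lesssim \frac{q^n}{(q-1)n}$, which is exactly Tenengolts' asymptotic bound.

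For the binary case I would specialize $q=2$, so that Theorem~\ref{thm:qary} reads $|\Cscr^*_{2,1,n}| \le \frac{2^n-2}{n-1}$. The same computation with $g(n)=2^n/n$ yields $\frac{|\Cscr^*_{2,1,n}|}{2^n/n} \le \frac{n}{n-1}\bigl(1-2^{1-n}\bigr) \to 1$, so $|\Cscr^*_{2,1,n}| \lesssim \frac{2^n}{n}$. For the reverse relation $\frac{2^n}{n} \lesssim |\Cscr^*_{2,1,n}|$ I would invoke the matching lower bound already recorded in the introduction: the Varshamov-Tenengolts codes~\cite{varshamov65codes} furnish single-deletion correcting codes of size at least $\frac{2^n}{n+1}$, whence
$$\frac{2^n/n}{|\Cscr^*_{2,1,n}|} \le \frac{2^n/n}{2^n/(n+1)} = \frac{n+1}{n} \to 1.$$
Combining the two directions gives $|\Cscr^*_{2,1,n}| \sim \frac{2^n}{n}$, which is Levenshtein's asymptotic result~\cite{levenshtein66binary}.

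The computations here are routine; the only genuinely substantive ingredient is the binary lower bound, which is not reproved but imported from the known VT-code construction~\cite{varshamov65codes,levenshtein66binary}. I would also flag one minor technical point: since $g$ is unbounded the ratio $|\Cscr^*_{q,1,n}|/g(n)$ need not a priori converge, so strictly one bounds its $\limsup$ by the \emph{convergent} ratio of the explicit upper bound to $g$ (and symmetrically for the lower bound). This is all that is needed to justify the relations $\lesssim$ and $\sim$ as defined in the introduction, and no further obstacle arises.
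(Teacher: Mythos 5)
Your proposal is correct and follows essentially the same route as the paper's own proof: it combines the non-asymptotic bound of Theorem~\ref{thm:qary} with the Varshamov--Tenengolts lower bound $|\Cscr^*_{2,1,n}| \geq \frac{2^n}{n+1}$ in the binary case, and uses only the upper bound for general $q$. The explicit ratio computations and the remark about bounding the $\limsup$ are fine elaborations of steps the paper leaves implicit.
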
 
\begin{proof}
For binary alphabet, Levenshtein~\cite{levenshtein66binary} shows that the VT codes correct single deletions. These codes are of size at least $\frac{2^n}{n+1}$, whereby $|\Cscr_{2,1,n}^*| \geq \frac{2^n}{n+1}$. Combining this with Theorem \ref{thm:qary} shows that 
\[\frac{2^n}{n+1} \leq |\Cscr_{2,1,n}^*| \leq \frac{2^n-2}{n-1}.\]
Thus $\frac{|\Cscr^*_{2,1,n}|}{2^n/n} \buildrel{n}\over\rightarrow 1.$ For the $q$-ary case, since by Theorem \ref{thm:qary}, $|\Cscr_{q,1,n}^*| \leq \frac{q^n-q}{(q-1)(n-1)}$, $\lim_{n\rightarrow \infty} \frac{|\Cscr^*_{q,1,n}|}{q^n/n(q-1)} \leq 1.$ 
\end{proof}

\section{Non-asymptotic upper bounds for multiple-deletion correcting codes and the asymptotic rate function}\label{sec:larger}
We now extend the logic used in the bound  above to channels with multiple deletions. 

And as we did in the single-deletion case, we will use the hypergraph $\Hdqsn$ to obtain our bound.
The key property employed in the proof of Theorem \ref{thm:qary} was that the number of runs of a string increases under the insertion of a symbol. This is in fact a specific consequence of a more general property shown by Hirschberg and Regnier~\cite[Lemma 3.1]{hirschberg00tight}: for any $s$, the size of the $s$-deletion set of a string increases under the insertion of a symbol.  This result is articulated in the following lemma. Here if $x=x_1x_2\hdots x_n$ and $y=y_1y_2\hdots y_m$ are $q$-ary strings, `$xy$' denotes the string
$x_1x_2\hdots x_ny_1y_2\hdots y_m.$
\begin{lemma}\label{lem:hirschberg}
Let $s \in \Nbb$. For any strings $x,y \in \Fbb_q^*$ and any symbol $\sigma \in \Fbb_q$, $|D_s(xy)| \leq |D_s(x\sigma y)|$.
\end{lemma}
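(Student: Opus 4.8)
The plan is to exhibit an injection $\phi$ from $D_s(xy)$ into $D_s(x\sigma y)$; since both sets are finite, this gives $|D_s(xy)| \le |D_s(x\sigma y)|$ immediately. The starting observation is that $D_s(w)$ is exactly the set of distinct subsequences of $w$ of length $|w|-s$, so every element of $D_s(xy)$ has length $|x|+|y|-s$ while every element of $D_s(x\sigma y)$ has length $|x|+|y|-s+1$. Thus $\phi$ should insert a single copy of $\sigma$ into $z$ at a position chosen canonically enough to be reversible.

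To build $\phi$, I would fix $z \in D_s(xy)$ and split it at the \emph{rightmost boundary compatible with $x$}: let $u$ be the longest prefix of $z$ that is a subsequence of $x$, and let $v$ be the remaining suffix, so $z=uv$. One must first check this is a legal split, i.e.\ that $v$ is a subsequence of $y$. This holds because the prefixes of $z$ that are subsequences of $x$ form a set closed under shortening, the suffixes of $z$ that are subsequences of $y$ form a set closed under shortening, and these two families overlap (as $z$ is a subsequence of $xy$); hence the maximal $x$-prefix $u$ necessarily leaves an admissible suffix $v$. I then set $\phi(z)=u\sigma v$. Since $u$ is a subsequence of $x$, the inserted symbol matches the middle $\sigma$, and $v$ is a subsequence of $y$, the word $\phi(z)$ is a subsequence of $x\sigma y$ of length $|z|+1=|x|+|y|-s+1$, so $\phi(z)\in D_s(x\sigma y)$.

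The crux, and the step I expect to require the most care, is injectivity, because the only thing that can go wrong is that $\sigma$ occurs repeatedly: inserting $\sigma$ at two different positions of two different words could a priori yield the same string. Suppose $\phi(z)=\phi(\ztilde)=W$ with $z\neq\ztilde$. Both $z$ and $\ztilde$ arise from $W$ by deleting one $\sigma$; if the two deleted copies lay in the same maximal run of $\sigma$ in $W$, the results would coincide, so the deleted positions $p<p'$ lie in distinct runs separated by a non-$\sigma$ symbol, giving $p'\ge p+2$. By construction $p-1$ is the length of the longest prefix of $z$ that is a subsequence of $x$, so the length-$p$ prefix of $z$, namely $W_1\cdots W_{p-1}W_{p+1}$, is \emph{not} a subsequence of $x$. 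But the defining property of $\phi(\ztilde)$ makes $W_1\cdots W_{p'-1}$ a subsequence of $x$, and $W_1\cdots W_{p-1}W_{p+1}$ is a subsequence of $W_1\cdots W_{p'-1}$ (using $p+1\le p'-1$), hence itself a subsequence of $x$ --- a contradiction. So $\phi$ is injective.

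Conceptually the whole argument is just ``insert $\sigma$ at the rightmost boundary compatible with $x$,'' and the maximality of that boundary is precisely what disambiguates the repeated-$\sigma$ case. I would expect the only genuinely technical points to be the verification that the longest-$x$-prefix split is admissible and the index bookkeeping ($p'\ge p+2$, $p+1\le p'-1$) in the injectivity step; everything else is formal. Note this also specializes to Lemma~\ref{lem:runs} when $s=1$, via $|D_1(\cdot)|=r(\cdot)$ from \eqref{eq:delsize}.
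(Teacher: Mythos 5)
Your proof is correct. Note first that the paper does not actually prove this lemma: it is imported verbatim from Hirschberg and Regnier \cite[Lemma 3.1]{hirschberg00tight}, with only the remark that the extension to empty $x$ or $y$ is trivial. So your proposal supplies an argument where the paper supplies a citation. The argument itself is sound and is essentially the canonical-injection idea underlying the cited result: the split of $z\in D_s(xy)$ at the longest prefix that embeds in $x$ is well defined (your closure-under-shortening observation correctly shows the residual suffix embeds in $y$), the map $z\mapsto u\sigma v$ lands in $D_s(x\sigma y)$ with the right length, and the injectivity step is handled properly --- the only collision risk is two insertions of $\sigma$ into the same maximal $\sigma$-run of $W$, which would force $z=\ztilde$, and otherwise $p'\ge p+2$ lets you contradict the maximality of $u$ exactly as you write. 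A minor virtue of your version is that it treats the empty-string cases ($x$ or $y$ empty, where $u$ or $v$ degenerates) uniformly within the same construction, which is precisely the extension the paper asserts without proof.
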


The original result from \cite[Lemma 3.1]{hirschberg00tight} seems to pertain to nonempty strings $x,y$; this is apparent from their proof. However the extension to the case where one of $x,y$ is empty is trivial and we have included it in the above statement. The consequence is that, in this lemma, $\sigma$ can be thought of as a symbol inserted into an existing string $xy$. A recursive application of 
Lemma \ref{lem:hirschberg} then immediately yields that for any $s$ and any string $x\in \Fbb_q^n$, 
\begin{equation}
|D_s(x)| \leq |D_s(y)|, \qquad \forall\ y \in I_s(x). \label{eq:delmon}
\end{equation}
Looking back at the size of the single-deletion set from \eqref{eq:delsize}, one sees that the monotonicity relationship of Lemma \ref{lem:runs} is a special case of \eqref{eq:delmon}. 

We now exploit \eqref{eq:delmon} to give an upper bound on the size of an $s$-deletion correcting code for arbitrary $s$. The proof utilizes, as before, the fractional transversal problem of $\Hdqsn$.
$$ \problemsmall{$\tau^*(\Hdqsn)=\ $}
        {w}
        {\sum_{x \in \Fbb_q^{n-s}} w(x)}
                                 {\hspace{-1cm}\begin{array}{r@{\ }c@{\ }ll}
\sum_{x \in D_s(y)} w(x) &\geq & 1, & \forall  y\in \Fbb_q^{n},\\
w(x)                                    &\geq & 0,  & \forall x \in \Fbb_q^{n-s}.                              
        \end{array}} 
$$
\begin{theorem}\label{thm:sqary}
Let $s,q,n \in \Nbb$ such that $n>s, q\geq 2$. The optimal $s$-deletion correcting code $\Cscr^*_{q,s,n}$ satisfies
\begin{equation}
 |\Cscr^*_{q,s,n}| \leq \sum_{x \in \Fbb_q^{n-s}}\frac{1}{|D_s(x)|}. \label{eq:sqary}
\end{equation} 
\end{theorem}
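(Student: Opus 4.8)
The plan is to mimic the proof of Theorem \ref{thm:qary} exactly, replacing the single-deletion-specific ingredients with their $s$-deletion analogues that have already been established in the excerpt. By Lemma \ref{lem:line}, the size $|\Cscr^*_{q,s,n}|$ of the optimal code equals $\nu(\Hdqsn)$, and by Lemma \ref{lem:duality} any fractional transversal $w$ of $\Hdqsn$ satisfies $\nu(\Hdqsn) \leq \bfone\t w$. Hence it suffices to exhibit a single fractional transversal whose weight equals the right-hand side $\sum_{x \in \Fbb_q^{n-s}} \frac{1}{|D_s(x)|}$.

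The natural candidate, generalizing the choice $w(x) = 1/r(x) = 1/|D_1(x)|$ from Theorem \ref{thm:qary}, is
\[
w(x) = \frac{1}{|D_s(x)|}, \qquad \forall\ x \in \Fbb_q^{n-s}.
\]
First I would observe that $w \geq 0$, which is immediate since $|D_s(x)| \geq 1$ for every string. Then I would verify the transversal (covering) constraint: for any $y \in \Fbb_q^n$,
\[
\sum_{x \in D_s(y)} w(x) = \sum_{x \in D_s(y)} \frac{1}{|D_s(x)|} \geq \frac{|D_s(y)|}{|D_s(y)|} = 1.
\]
Here the inequality uses the monotonicity relation \eqref{eq:delmon}: every $x \in D_s(y)$ is an $s$-deletion of $y$, so $y \in I_s(x)$ and therefore $|D_s(x)| \leq |D_s(y)|$, which gives $\frac{1}{|D_s(x)|} \geq \frac{1}{|D_s(y)|}$ for each of the $|D_s(y)|$ terms in the sum. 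This is the only place the structure of the $s$-deletion channel enters, and it is precisely the content of the Hirschberg--Regnier bound recorded in Lemma \ref{lem:hirschberg} and its corollary \eqref{eq:delmon}.

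Finally, the weight of $w$ is by definition $\sum_{x \in \Fbb_q^{n-s}} w(x) = \sum_{x \in \Fbb_q^{n-s}} \frac{1}{|D_s(x)|}$, which is exactly the claimed bound, so Lemma \ref{lem:duality} finishes the argument. Unlike the $s=1$ case, I would not attempt to evaluate this sum in closed form: the point stressed in the introduction is that for $s>1$ the quantity $|D_s(x)|$ depends on finer statistics of $x$ than just its number of runs, so the counting step \eqref{eq:delsize} that collapsed the sum in Theorem \ref{thm:qary} has no clean analogue here. Thus the expected main obstacle is not the proof of the inequality, which is essentially routine once \eqref{eq:delmon} is in hand, but rather the fact that the bound is left implicit; any subsequent sharpening into an explicit expression would require controlling the distribution of $|D_s(x)|$ over $\Fbb_q^{n-s}$, and I would defer that to the later development in Section \ref{sec:larger} rather than to the proof of this statement.
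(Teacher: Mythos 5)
Your proposal is correct and matches the paper's own proof essentially verbatim: both construct the fractional transversal $w(x)=1/|D_s(x)|$ on $\Hdqsn$, verify the covering constraint via the monotonicity relation \eqref{eq:delmon}, and conclude by Lemma \ref{lem:duality}. Your added remark spelling out why each term satisfies $1/|D_s(x)|\geq 1/|D_s(y)|$ is a slightly more explicit rendering of the paper's step $(a)$, nothing more.
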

\begin{proof}
We construct a fractional transversal for $\Hdqsn$. Consider the candidate fractional transversal $w$, such that for any $x \in \Fbb_q^{n-s}$, $w(x) = \frac{1}{|D_s(x)|}.$ Obviously, $w \geq 0$. Furthermore, for any $y\in \Fbb_q^n,$
\[\sum_{x \in D_s(y)} w(x) = \sum_{x \in D_s(y)}\frac{1}{|D_s(x)|} \buildrel{(a)}\over\geq   1,  \]
where $(a)$ follows from the monotonicity relation \eqref{eq:delmon}. Thus $w$ is indeed a fractional transversal of $\Hdqsn$. Now by Lemma \ref{lem:duality}, the weight of $w$ is an upper bound on $\nu(\Hdqsn) = |\Cscr^*_{q,s,n}|$, whereby 
the result follows.
\end{proof}

In order to derive explicit bounds, we now discuss the sizes of $s$-deletion sets.
For $s \leq 5$, Mercier \etal \cite[Section III.D]{mercier08number} give closed form formulae for the size of $s$-deletion sets, which unlike in the single-deletion case, have quite a complicated form. Closed form expressions for $2$-deletion sets for binary alphabet are also given by Swart and Ferreira~\cite{swart03note} and Sloane~\cite{sloane02single}. 
% Other closed form formulae for binary strings 
The only results on deletion sets valid for arbitrary $s$ are bounds. For all $x \in \Fbb_q^n,$ the $s$-deletion set of $x$ admits the following lower bound, shown recently by Liron and Langberg~\cite[Theorem VI.2]{liron12characterization}. For any $s<n$ and any string $x\in \Fbb_q^n$ with $2<r(x)\leq n,$
\begin{align}
 |D_s(x)| &\geq \delta(r(x),s) + \sum_{i=s+r(x)-n-1}^{\min(s-2,r(x)-3)}\delta(r(x)-2,i)  \label{eq:lllb}\\
\where       &  \delta(r,s) \triangleq \begin{cases}
\sum_{i=0}^s \binom{r-s}{i}, & r> s\geq 0,\\
1, & s=r\geq 0,\\
0, & s<0 \ {\rm or}\ s>r.
\end{cases} \label{eq:delta}
\end{align}
Notice that this bound on $|D_s(\cdot)|$ is always positive. 
Additionally it is an improvement on previous bounds of Levenshtein~\cite{levenshtein92perfect} and Hirschberg and  Regnier~\cite{hirschberg00tight}. 
%: for any $x \in \Fbb_q^n$,
%\begin{equation}
%|D_s(x)| \geq \delta(r(x),s). \label{eq:hirschberg}
%\end{equation}
% All of the discussion from  Section \ref{sec:hyperchar} pertaining to difficulties in obtaining an upper bounds applies equally to the case of larger number of deletions. 

By using the explicit formulae (\eg, \cite{mercier08number,swart03note,sloane02single}) for the sizes of $s$-deletion sets 
in \eqref{eq:sqary}, one may obtain explicit upper bounds on $|\Cscr^*_{q,s,n}|$, for $s\leq 5$. For general $s$, we derive an upper bound on the right hand side of \eqref{eq:sqary} by combining Theorem \ref{thm:sqary} with the lower bound in \eqref{eq:lllb}. Note that the explicit formulae will yield tighter bounds than the one below.
\begin{corollary}\label{cor:sdelub}
Let $s,q,n \in \Nbb, q \geq 2,n > 2s$. The optimal $s$-deletion correcting code $\Cscr^*_{q,s,n}$ satisfies 
$$|\Cscr^*_{q,s,n}| \leq U_{q,s,n},$$ where 
\begin{align}
U_{q,s,n} &\triangleq \sum_{r=3}^{n-s}\frac{q(q-1)^{r-1}\binom{n-s-1}{r-1}}{\delta(r,s)+ \sum_{i=s+r-(n-s)-1}^{\min(s-2,r-3)}\delta(r-2,i)}     \non \\
 &\qquad+ \sum_{r=1}^{2}q(q-1)^{r-1}\binom{n-s-1}{r-1}, \label{eq:explicit}
\end{align}
and $\delta(\cdot,\cdot)$ is as defined in \eqref{eq:delta}. 
\end{corollary}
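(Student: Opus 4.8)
The plan is to start from the inequality $|\Cscr^*_{q,s,n}| \leq \sum_{x \in \Fbb_q^{n-s}} \frac{1}{|D_s(x)|}$ supplied by Theorem \ref{thm:sqary}, and then to estimate the right-hand side by grouping the summands according to the number of runs $r(x)$ and applying the Liron--Langberg lower bound \eqref{eq:lllb} on $|D_s(x)|$. Everything downstream of Theorem \ref{thm:sqary} is a matter of counting and substitution, so the proof should be short.

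First I would reorganize the sum over $x \in \Fbb_q^{n-s}$ by the value $r = r(x)$, writing $\sum_{x} \frac{1}{|D_s(x)|} = \sum_{r=1}^{n-s} \sum_{x\,:\,r(x)=r} \frac{1}{|D_s(x)|}$. The number of strings of length $n-s$ with exactly $r$ runs is $q(q-1)^{r-1}\binom{n-s-1}{r-1}$: this is exactly the counting argument from the proof of Theorem \ref{thm:qary}, now applied to length $n-s$, where the factor $q(q-1)^{r-1}$ counts the admissible run symbols and the factor $\binom{n-s-1}{r-1}$ counts the run-length compositions through Lemma \ref{lem:eqsol} (taken with total $n-s$, $k=r$ parts, and $d=1$).

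Next, for each string with $r(x) = r \geq 3$, that is $2 < r(x) \leq n-s$, I would apply \eqref{eq:lllb} with $n$ replaced by $n-s$. This substitution is legitimate precisely because the hypothesis $n > 2s$ guarantees $s < n-s$, which is the condition under which \eqref{eq:lllb} holds for strings of length $n-s$; note that the lower summation index $s+r(x)-n-1$ of \eqref{eq:lllb} correspondingly becomes $s+r-(n-s)-1$, matching the denominator in the first sum of $U_{q,s,n}$. This gives $\frac{1}{|D_s(x)|} \leq \frac{1}{\delta(r,s) + \sum_{i=s+r-(n-s)-1}^{\min(s-2,r-3)} \delta(r-2,i)}$, and multiplying by the count $q(q-1)^{r-1}\binom{n-s-1}{r-1}$ and summing $r$ from $3$ to $n-s$ produces the first sum in $U_{q,s,n}$. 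For the strings with $r(x) \in \{1,2\}$ the bound \eqref{eq:lllb} is unavailable, since it requires $r(x) > 2$; here I would fall back on the trivial estimate $|D_s(x)| \geq 1$, valid because $n > 2s$ makes $D_s(x)$ nonempty for $x$ of length $n-s$. This gives $\frac{1}{|D_s(x)|} \leq 1$ and hence a contribution of $\sum_{r=1}^{2} q(q-1)^{r-1}\binom{n-s-1}{r-1}$, the second sum in $U_{q,s,n}$. Adding the two contributions yields $|\Cscr^*_{q,s,n}| \leq U_{q,s,n}$.

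There is no deep obstacle here, as the argument is essentially bookkeeping once Theorem \ref{thm:sqary} and \eqref{eq:lllb} are in place. The only points demanding care are the index shift $n \mapsto n-s$ in \eqref{eq:lllb} together with the matching shift of the inner summation range for $i$, the verification that the hypothesis $n > 2s$ is exactly what makes both this substitution and the nonemptiness of $D_s(x)$ valid, and the clean isolation of the $r(x) \in \{1,2\}$ terms that fall outside the scope of \eqref{eq:lllb}.
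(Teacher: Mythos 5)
Your proposal is correct and follows essentially the same route as the paper: invoke Theorem \ref{thm:sqary}, split the sum over $x\in\Fbb_q^{n-s}$ according to whether $r(x)\geq 3$ or $r(x)\leq 2$, apply the Liron--Langberg bound \eqref{eq:lllb} (shifted to length $n-s$, which is where $n>2s$ is used) to the first group, and bound the second group trivially by the count of strings with at most two runs. The only difference is presentational -- you make the run-count enumeration $q(q-1)^{r-1}\binom{n-s-1}{r-1}$ and the index shift in \eqref{eq:lllb} explicit, which the paper leaves implicit.
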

\begin{proof}
By Theorem \ref{thm:sqary}, we have 
\begin{align*}
|\Cscr^*_{q,s,n}| &\leq \sum_{x \in \Fbb_q^{n-s}: r(x)\geq 3} \frac{1}{|D_s(x)|}+\sum_{x \in \Fbb_q^{n-s}: r(x) < 3} \frac{1}{|D_s(x)|}.   
\end{align*}
For $n-s> s$ and strings $x\in \Fbb_q^{n-s}$  such that $r(x) \geq 3$, the bound in \eqref{eq:lllb} applies; furthermore, notice that for such $x$, the bound in \eqref{eq:lllb} is strictly positive. So using \eqref{eq:lllb} in the equation above, the first sum can be upper-bounded and the resulting bound is the first term in \eqref{eq:explicit}. The second sum in the equation above admits the trivial upper bound $|\{x \in \Fbb_q^{n-s} | r(x) \leq 2\}|,$ which is the second term in \eqref{eq:explicit}. Hence the bound.
\end{proof}

%The above corollary assumes $n>2s.$ Below, we derive a weaker bound valid for any $n>s$.
%\begin{corollary}\label{cor:vqsn}
%Let $s,q,n \in \Nbb, n>s, q \geq 2$. The optimal $s$-deletion correcting code $\Cscr^*_{q,s,n}$ satisfies 
%\[|\Cscr^*_{q,s,n}| \leq V_{q,s,n},\]
%where
%\begin{align}
%V_{q,s,n} &\triangleq \sum_{r=s}^{n-s}\frac{q(q-1)^{r-1}\binom{n-s-1}{r-1}}{\delta(r,s)}    \non \\
%&\qquad + \sum_{r=1}^{s-1}q(q-1)^{r-1}\binom{n-s-1}{r-1} \label{eq:vqsn}
%\end{align}
%\end{corollary}
%\begin{proof}
%The proof follows from the bound in \eqref{eq:hirschberg} and 
%the same arguments as in the proof of Corollary \ref{cor:sdelub}.
%\end{proof}

One of the aims of this paper was to produce non-asymptotic upper bounds that imply known asymptotic bounds. 
We now show that the bound $U_{q,s,n}$  meets this purpose. Our main result is that $U_{q,s,n}$ (and the expression $\sum_{x \in \Fbb_q^{n-s}} \frac{1}{|D_s(x)|}$) implies the previous results of Levenshtein~\cite{levenshtein66binary} stated in \eqref{eq:levenasymp} for $q=2$, and generalizes these results to $q$-ary alphabet.

In order to do this, we first show a \textit{lower} bound on (the upper bound) $U_{q,s,n}.$ For this we recall an upper bound on sizes of deletion sets due to Levenshtein~\cite{levenshtein92perfect}: for any $n,q \in \Nbb$, 
\begin{equation}
|D_s(x)|\leq \binom{r(x)+s-1}{s}, \qquad \forall x\in \Fbb_q^n. \label{eq:levdelub}
\end{equation}

\begin{lemma}\label{lem:uqsnlb}
Let $q,s,n \in \Nbb$, $n>2s, q \geq 2$. The upper bound  $U_{q,s,n}$ satisfies the lower bound
\[ U_{q,s,n} \geq  \sum_{x \in \Fbb_q^{n-s}} \frac{1}{|D_s(x)|} \geq \frac{q^n- q\sum_{r=0}^{s-1} (q-1)^{r} \binom{n-1}{r}}{(q-1)^s\binom{n-1}{s}}.
\]
\end{lemma}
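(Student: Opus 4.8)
The plan is to establish the two inequalities separately. The first, $U_{q,s,n}\geq \sum_{x\in\Fbb_q^{n-s}} 1/|D_s(x)|$, requires no new work: it is precisely the bound produced in the course of proving Corollary~\ref{cor:sdelub}, where the right-hand side of \eqref{eq:sqary}, namely $\sum_{x} 1/|D_s(x)|$, was split by the number of runs and upper-bounded term-by-term (using \eqref{eq:lllb} on the $r(x)\geq 3$ part and the trivial bound $|D_s(x)|\geq 1$ on the rest) to yield $U_{q,s,n}$. So I would simply recall that chain and turn to the second, substantive, inequality.

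For the second inequality I would push on the deletion-set sizes from the opposite direction, via Levenshtein's \emph{upper} bound \eqref{eq:levdelub}, $|D_s(x)|\leq \binom{r(x)+s-1}{s}$. This reverses into $1/|D_s(x)|\geq 1/\binom{r(x)+s-1}{s}$, valid for every $x$ (so no case split on $r(x)$ is needed here), giving
\[ \sum_{x\in\Fbb_q^{n-s}} \frac{1}{|D_s(x)|}\ \geq\ \sum_{x\in\Fbb_q^{n-s}}\frac{1}{\binom{r(x)+s-1}{s}}. \]
I would then collect strings by their number of runs: as counted in the proof of Theorem~\ref{thm:qary} (through Lemma~\ref{lem:eqsol}), the number of strings in $\Fbb_q^{n-s}$ with exactly $r$ runs is $q(q-1)^{r-1}\binom{n-s-1}{r-1}$, so the right-hand side becomes $\sum_{r=1}^{n-s} q(q-1)^{r-1}\binom{n-s-1}{r-1}\big/\binom{r+s-1}{s}$.

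The heart of the argument is the run-free binomial identity
\[ \frac{\binom{n-s-1}{r-1}}{\binom{r+s-1}{s}}\ =\ \frac{1}{\binom{n-1}{s}}\binom{n-1}{r+s-1}, \]
which I would verify by expanding both sides into factorials; the $r$-dependent factors cancel, leaving the $r$-independent constant $\tfrac{(n-s-1)!\,s!}{(n-1)!}=1/\binom{n-1}{s}$. Substituting this, reindexing with $j=r+s-1$ (so $j$ runs over $s,\dots,n-1$ and $(q-1)^{r-1}=(q-1)^{j-s}$), and extracting the factor $(q-1)^{-s}$ converts the sum into $\frac{q}{(q-1)^s\binom{n-1}{s}}\sum_{j=s}^{n-1}(q-1)^j\binom{n-1}{j}$. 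Finally I would complete the partial binomial sum using $\sum_{j=0}^{n-1}(q-1)^j\binom{n-1}{j}=q^{n-1}$, i.e. $\sum_{j=s}^{n-1}(q-1)^j\binom{n-1}{j}=q^{n-1}-\sum_{j=0}^{s-1}(q-1)^j\binom{n-1}{j}$, which lands on exactly $\frac{q^n-q\sum_{r=0}^{s-1}(q-1)^r\binom{n-1}{r}}{(q-1)^s\binom{n-1}{s}}$.

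I expect the binomial identity to be the only genuine obstacle; everything after it is reindexing and an application of the binomial theorem. It is worth flagging that the calculation actually produces the claimed expression as an \emph{equality} for $\sum_x 1/\binom{r(x)+s-1}{s}$, so the entire gap between $\sum_x 1/|D_s(x)|$ and the stated lower bound is controlled by the single inequality \eqref{eq:levdelub}; this transparency is what later lets the bound be matched against Levenshtein's asymptotics in \eqref{eq:levenasymp}.
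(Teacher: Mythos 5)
Your proposal is correct and follows essentially the same route as the paper's proof: the first inequality is recalled from the proof of Corollary \ref{cor:sdelub}, and the second is obtained by applying \eqref{eq:levdelub}, grouping strings by run count, and using the identity $\binom{n-s-1}{r-1}/\binom{r+s-1}{s} = \binom{n-1}{r+s-1}/\binom{n-1}{s}$ before summing via the binomial theorem. The paper's step $(a)$ is exactly your binomial identity, so there is nothing to add.
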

\begin{proof} The first inequality on the left follows from the proof of Corollary \ref{cor:sdelub}. 
To show the second inequality, use the upper bound on $|D_s(\cdot)|$ from \eqref{eq:levdelub}, to get that the sum $\sum_{x \in \Fbb_q^{n-s}} \frac{1}{|D_s(x)|}$ is no less than
\begin{align*}
\sum_{x \in \Fbb_q^{n-s}} &\frac{1}{\binom{r(x)+s-1}{s}} 
= \sum_{r=1}^{n-s} \frac{q(q-1)^{r-1}\binom{n-s-1}{r-1}}{\binom{r+s-1}{s}} \\
%&= \sum_{r=1}^{n-s}\frac{q(q-1)^{r-1}(n-s-1)!(r-1)!s!}{(n-s-r)!(r-1)!(r+s-1)!}\\
%&= \sum_{r=1}^{n-s} \frac{q(q-1)^{r-1}(n-s-1)!s!}{(n-s-r)!(r+s-1)!} \\
%&= \frac{(n-s-1)!s!}{(n-1)!}\sum_{r=1}^{n-s} \frac{q(q-1)^{r-1}(n-1)!}{(n-s-r)!(r+s-1)!}  \\
&\buildrel{(a)}\over= \frac{q}{(q-1)^s\binom{n-1}{s}}\sum_{r=1}^{n-s}(q-1)^{r+s-1} \binom{n-1}{r+s-1}, \\
%&= \frac{q}{(q-1)^s\binom{n-1}{s}}\left[q^{n-1} - \sum_{r=1-s}^0 (q-1)^{r+s-1} \binom{n-1}{r+s-1}\right]  \\
&= \frac{q^n- q\sum_{r=0}^{s-1} (q-1)^{r} \binom{n-1}{r}}{(q-1)^s\binom{n-1}{s}}.
\end{align*}
In $(a)$ we have used that $\frac{\binom{n-s-1}{r-1}}{\binom{r+s-1}{s}} = \frac{\binom{n-1}{r+s-1}}{\binom{n-1}{s}}$.
This proves the claim.
\end{proof}
Notice that the above calculations are a generalization of our proof of the bound on single-deletion correcting codes in Theorem \ref{thm:qary}. 

We now prove the asymptotics of $U_{q,s,n}$ by deriving a matching asymptotic upper bound. 
\begin{theorem} \label{thm:uqsnasymp}
Let $q,s \in \Nbb, q \geq 2$. 
The upper bound on $s$-deletion correcting codes $U_{q,s,n}$ satisfies
\[ U_{q,s,n} \sim \sum_{x \in \Fbb_q^{n-s}}\frac{1}{|D_s(x)|}\sim \frac{s!q^n}{(q-1)^sn^s},     \]
as $n \rightarrow \infty$. Consequently, as $n\rightarrow \infty,$
\[|\Cscr^*_{q,s,n}| \lesssim \frac{s!q^n}{(q-1)^sn^s}.    \]
\end{theorem}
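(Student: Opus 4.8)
The plan is to sandwich $U_{q,s,n}$ (and hence $\sum_{x\in\Fbb_q^{n-s}}1/|D_s(x)|$) between two quantities sharing the common asymptotic value $\frac{s!q^n}{(q-1)^sn^s}$. The engine of the argument is that both the Liron--Langberg lower bound on $|D_s(x)|$ used to define $U_{q,s,n}$ and Levenshtein's upper bound \eqref{eq:levdelub} are, for a string with $r$ runs, asymptotically $r^s/s!$ as $r\to\infty$; precisely, writing $\delta(r,s)=\sum_{i=0}^s\binom{r-s}{i}$ as in \eqref{eq:delta} and recalling \eqref{eq:levdelub}, one has $\delta(r,s)/\binom{r+s-1}{s}\to1$ as $r\to\infty$. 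Since the number of length-$(n-s)$ strings with exactly $r$ runs is $w_r\triangleq q(q-1)^{r-1}\binom{n-s-1}{r-1}$ with $\sum_{r=1}^{n-s}w_r=q^{n-s}$, the weight concentrates on $r=\Theta(n)$, where the two bounds on $|D_s(\cdot)|$ coincide to first order; this is what forces the two sides to match.

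For the lower half I would read off the asymptotics of the explicit bound in Lemma \ref{lem:uqsnlb}. Its numerator $q^n-q\sum_{r=0}^{s-1}(q-1)^r\binom{n-1}{r}$ is $q^n$ minus a polynomial of degree $s-1$ in $n$, hence $\sim q^n$, while its denominator $(q-1)^s\binom{n-1}{s}\sim(q-1)^sn^s/s!$. Thus
\[
\sum_{x\in\Fbb_q^{n-s}}\frac{1}{|D_s(x)|}\ \geq\ \frac{q^n-q\sum_{r=0}^{s-1}(q-1)^r\binom{n-1}{r}}{(q-1)^s\binom{n-1}{s}}\ \sim\ \frac{s!q^n}{(q-1)^sn^s},
\]
and since $U_{q,s,n}\geq\sum_x 1/|D_s(x)|$ (the left inequality of Lemma \ref{lem:uqsnlb}), both $U_{q,s,n}$ and $\sum_x 1/|D_s(x)|$ have $\liminf$ of their ratio to $\frac{s!q^n}{(q-1)^sn^s}$ at least $1$.

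For the upper half I would fix $\epsilon>0$ and use the limit above to choose a constant $r_0>s$, independent of $n$, with $\delta(r,s)\geq(1-\epsilon)\binom{r+s-1}{s}$ for all $r\geq r_0$. Splitting the first sum in \eqref{eq:explicit} at $r_0$: on $r\geq r_0$ the denominator of each term is at least $\delta(r,s)\geq(1-\epsilon)\binom{r+s-1}{s}$, so that portion is bounded by
\[
\frac{1}{1-\epsilon}\sum_{r=1}^{n-s}\frac{w_r}{\binom{r+s-1}{s}}=\frac{1}{1-\epsilon}\cdot\frac{q^n-q\sum_{r=0}^{s-1}(q-1)^r\binom{n-1}{r}}{(q-1)^s\binom{n-1}{s}}\sim\frac{1}{1-\epsilon}\cdot\frac{s!q^n}{(q-1)^sn^s},
\]
using the identity from the proof of Lemma \ref{lem:uqsnlb} (extending the sum to all $r$ only enlarges it). On the complementary range $3\leq r<r_0$ (finitely many terms) I would use only positivity of the Liron--Langberg denominator in \eqref{eq:explicit}, i.e. that it is $\geq1$, to bound each term by $w_r$, which is polynomial in $n$ of degree below $r_0$; together with the $r\in\{1,2\}$ contribution $O(n)$ in \eqref{eq:explicit}, these terms are $o(q^n/n^s)$. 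Hence $\limsup_n U_{q,s,n}/\bigl(s!q^n/((q-1)^sn^s)\bigr)\leq 1/(1-\epsilon)$, and letting $\epsilon\downarrow0$ gives the matching upper bound. Combining the two halves yields $U_{q,s,n}\sim\sum_x 1/|D_s(x)|\sim\frac{s!q^n}{(q-1)^sn^s}$, and the final claim follows from Theorem \ref{thm:sqary} and Corollary \ref{cor:sdelub}, since $|\Cscr^*_{q,s,n}|\leq\sum_x 1/|D_s(x)|\leq U_{q,s,n}$.

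The only genuinely delicate point is controlling the sum uniformly over $r$ despite the complicated closed form of the denominators, which amounts to reconciling the lower bound \eqref{eq:lllb} with the upper bound \eqref{eq:levdelub} on $|D_s(\cdot)|$. The threshold trick above sidesteps an explicit concentration-of-runs estimate by exploiting that the small-$r$ terms carry only polynomially many strings (hence are exponentially negligible against $q^n/n^s$), while on $r\geq r_0$ the two bounds differ by at most the factor $1/(1-\epsilon)$.
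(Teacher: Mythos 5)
Your proof is correct, but it takes a genuinely different route from the paper's. The paper also splits the sum over the number of runs $r$ at a threshold, but it places that threshold at $r'=\frac{q-1}{q}(n-s-1)-\sqrt{(n-s-1)\log(n-s-1)}$, bounds the whole upper range crudely by $q^{n-s}/\delta(r',s)$, and then must invoke a Chernoff bound on the binomial distribution of runs to show the lower range $\sum_{r<r'}q(q-1)^{r-1}\binom{n-s-1}{r-1}$ is exponentially negligible --- i.e.\ it is essentially a concentration-of-runs argument, and it mirrors Levenshtein's original two-category proof. You instead fix a \emph{constant} threshold $r_0=r_0(\epsilon)$ where $\delta(r,s)\geq(1-\epsilon)\binom{r+s-1}{s}$, compare the Liron--Langberg denominator termwise to Levenshtein's upper bound \eqref{eq:levdelub}, and then reuse the exact closed-form evaluation of $\sum_r w_r/\binom{r+s-1}{s}$ already carried out in the proof of Lemma \ref{lem:uqsnlb}; the finitely many small-$r$ terms are then only polynomially large and need no concentration estimate at all. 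Your version buys a more elementary and self-contained argument (no Chernoff bound, and it makes transparent why the upper and lower estimates must meet: both bounds on $|D_s(\cdot)|$ are $\sim r^s/s!$), at the cost of an $\epsilon$-and-limsup bookkeeping step; the paper's version buys the intermediate inequality \eqref{eq:levencomp}, which it wants anyway for the structural comparison with Levenshtein's bound \eqref{eq:leven} and for the rate-function computation in Theorem \ref{thm:rate}. One small point worth making explicit if you write this up: the positivity ($\geq 1$, being a positive integer) of the denominators for $3\leq r<r_0$ is exactly the observation already recorded in the proof of Corollary \ref{cor:sdelub}, so it is available to you.
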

\begin{proof}
Thanks to Lemma \ref{lem:uqsnlb}, to prove the first set of asymptotics, it suffices to show that $U_{q,s,n}\lesssim \frac{s!q^n}{(q-1)^sn^s}$ as $n\rightarrow \infty$.

Fix $r' \in \Nbb,$ $1\leq s \leq r' \leq n-s$. We first claim that $U_{q,s,n}$ satisfies 
\begin{align}
U_{q,s,n} & \leq \sum_{r =r'}^{n-s}\frac{q(q-1)^{r-1}\binom{n-s-1}{r-1}}{\delta(r',s)} \non\\ 
&\qquad + \sum_{r =1}^{r'-1}q(q-1)^{r-1}\binom{n-s-1}{r-1}. \label{eq:u}
\end{align}
To see this, use \eqref{eq:delta}  to conclude 
\[\delta(r,s)+ \sum_{i=s+r-(n-s)-1}^{\min(s-2,r-3)}\delta(r-2,i) \geq \delta(r,s)   \geq \delta(r',s),  \]
for any $r\geq r',$ and thus bound the terms in \eqref{eq:explicit} corresponding to $r\geq r'$. For terms corresponding to $r<r'$, employ 
the trivial bound $\delta(\cdot,\cdot) \geq 1$. 
%Taking $r'=s$,  gives that $U_{q,s,n} \leq V_{q,s,n}.$ For an arbitrary $r'\geq s$, we get the bound on the extreme right. 
Eq \eqref{eq:u} further implies 
\begin{equation}
 U_{q,s,n}\leq \frac{q^{n-s}}{\delta(r',s)} + \sum_{r =1}^{r'-1}q(q-1)^{r-1}\binom{n-s-1}{r-1}. \label{eq:levencomp}    
\end{equation} 
Consider a binomial distribution with parameters $(n-s-1)$ and $\frac{q-1}{q}$. 
The Chernoff bound on the cumulative binomial distribution implies that for $r'-1<\frac{q-1}{q}(n-s-1)$, the sum 
$\sum_{r =1}^{r'-1}q(q-1)^{r-1}\binom{n-s-1}{r-1}$ is no more than 
\[ q^{n-s}\exp\left( -\frac{((n-s-1)\frac{q-1}{q}-r'-2)^2}{2\frac{q-1}{q}(n-s-1)}\right).     \]
Setting $r'={\bf r}= \frac{q-1}{q}(n-s-1) - \sqrt{(n-s-1)\log(n-s-1)}$  in \eqref{eq:levencomp}, using the Chernoff bound and the fact that $\delta({\bf r},s) \sim s!(\frac{q-1}{q})^sn^s$, as $n \rightarrow \infty$, we get
\[U_{q,s,n} \lesssim \frac{s!q^n}{(q-1)^sn^s},\]
as $n \rightarrow \infty$. 
Combining this bound with Lemma \ref{lem:uqsnlb}, we get $U_{q,s,n} \sim \sum_{x \in \Fbb_q^{n-s}}\frac{1}{|D_s(x)|}\sim  \frac{s!q^n}{(q-1)^sn^s}.$ Finally, by Corollary \ref{cor:sdelub}, we get $|\Cscr^*_{q,s,n}| \lesssim \frac{s!q^n}{(q-1)^sn^s}$.
\end{proof}

Note that in addition to clarifying the asymptotics of $U_{q,s,n}$ the above theorem shows that using explicit formulae for $|D_s(\cdot)|$ in \eqref{eq:sqary} does not lead to any improvement over $U_{q,s,n}$ in an asymptotic sense. 

Notice that the right hand side in \eqref{eq:levencomp} closely resembles the expression in Levenshtein's bound from \eqref{eq:leven}. In fact Levenshtein's expression in \eqref{eq:leven} contains the term `$\binom{n-1}{\cdot}$' in place of `$\binom{n-s-1}{\cdot}$', and therefore appears to be weaker than \eqref{eq:levencomp}. However this observation does not directly translate to a proof that our bound $U_{q,s,n}$ is stronger than Levenshtein's bound. This is because the parameter $r$ in \eqref{eq:leven} is allowed to vary between $s-1$ and $n-1$, whereas in \eqref{eq:levencomp}, $r'$ is allowed to vary between $s-1$ and $n-s$. If one could make the deft argument that for any $n,s$, values of $r$ in \eqref{eq:leven} beyond $n-s$ are inconsequential to the comparison of \eqref{eq:leven} with $U_{q,s,n}$, one could establish that $U_{q,s,n}$ is indeed a better bound than Levenshtein's. We have empirically found that this is true; we discuss this in Section \ref{sec:numeric}.

Finally, it is evident that the bound $U_{q,s,n}$, while explicit, is hard to reduce to a closed form for any $s \neq 1$. It appears that the single-deletion case is a unique one which allows for a neat calculation of a closed form expression. 
%Calculation of a closed form expression for $U_{q,s,n}$ is beyond the scope of the current paper.
% Ascertaining the asymptotics of the bound in \eqref{eq:explicit} and the quest for a closed form expression for it are part of ongoing research and beyond 
\subsection{The asymptotic rate function}
Consider the case of a deletion channel where a fraction $\tau \in [0,1]$ of the symbols in a $q$-ary string are deleted. 
Denote by  $R_q(\tau)$ the asymptotic value of the rate of the largest code for this channel, 
\begin{equation}
 R_q(\tau)\triangleq \limn \frac{1}{n}\log_q |\Cscr^*_{q,\tau n,n}|.    \label{eq:rq}
\end{equation} 
We call $R_q(\tau)$ the asymptotic rate function for the deletion channel. 
Very little seems to be known about this function. Levenshtein's non-asymptotic bounds 
from \eqref{eq:leven} only lead to the conclusion $R_2(\tau) \leq 0.7729$ for $\tau \geq 0.0757$~\cite{levenshtein02bounds}. 
In this section we show that our non-asymptotic bound $U_{q,s,n}$ from Corollary \ref{cor:sdelub} allows for a calculation of a finer bound on $R_q(\cdot)$. 

In order to perform this calculation, we need to address some technicalities. Notice that Corollary \ref{cor:sdelub} assumes $n>2s$ to obtain the bound $U_{q,s,n}$. When $s$ was fixed, this restriction was immaterial. But for $s=\tau n$, this restriction means that Corollary \ref{cor:sdelub} can be used only for $\tau <\half$. For $\tau \geq \half,$ we will use the trivial bound 
\begin{equation}
|\Cscr^*_{q,\tau n,n}| \leq \sum_{x \in \Fbb_q^{n-\tau n}} \frac{1}{|D_s(x)|} \leq q^{(1-\tau) n}. \label{eq:trivial}
\end{equation}
Denote by $h_q(x),x\in [0,1]$ the following function
\[h_q(x) = -x\log_q(x)-(1-x)\log_q(1-x) + x\log_q(q-1),\]
and let $h(\cdot)\equiv h_2(\cdot)$, denote the binary entropy function.
\begin{theorem}\label{thm:rate}
Consider the asymptotic rate function $R_q(\cdot)$ defined in \eqref{eq:rq}. 
For $\tau \in [0,\half)$, the asymptotic rate function satisfies
\[R_q(\tau) \leq \max_{\rho \in [0,1-\tau]}N(\rho;\tau)-D(\rho;\tau),\]
where 
\begin{align*}
 N(\rho;\tau) &= (1-\tau) h_q\left( \frac{\rho}{1-\tau} \right), \\
%  D(\rho;\tau) &=\max (D_1(\rho;\tau), D_2(\rho;\tau)) \\
%  D_1(\rho;\tau)&=      \frac{(\rho -\tau) h\left( \min \left( \frac{\tau}{\rho -\tau},\half \right) \right)}{\log_2 q} \\
D(\rho;\tau) &= \max_{m_{\tau,\rho} \leq \mu \leq \min(\tau,\rho)} \frac{(\rho-\mu)h\left( \min \left( \frac{\mu}{\rho-\mu} \right),\half \right)       }{\log_2 q},           
\end{align*}
and $m_{\tau,\rho}=\max(2\tau +\rho -1,0).$ 
For $\tau \in [\half, 1]$, the asymptotic rate function satisfies
\[R_q(\tau) \leq (1-\tau).    \]
\end{theorem}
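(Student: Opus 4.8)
The plan is to split on the two regimes for $\tau$ and to reduce everything to an exponential-order (Laplace-type) estimate of bounds already in hand. For $\tau \in [\half,1]$ the claim is immediate: here $s=\tau n \geq n/2$, so Corollary \ref{cor:sdelub} does not apply, but the trivial estimate \eqref{eq:trivial} gives $|\Cscr^*_{q,\tau n,n}| \leq q^{(1-\tau)n}$. Taking $\frac1n\log_q$ and letting $n\to\infty$ in \eqref{eq:rq} yields $R_q(\tau)\leq 1-\tau$ at once.

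For $\tau\in[0,\half)$ I would start from $|\Cscr^*_{q,\tau n,n}|\leq U_{q,s,n}$ (Corollary \ref{cor:sdelub}, valid since $n>2s$) and compute $\limn \frac1n\log_q U_{q,s,n}$. The idea is to organize the sum \eqref{eq:explicit} by the number of runs $r$ and pass to the normalized variable $\rho = r/n\in[0,1-\tau]$. The numerator $q(q-1)^{r-1}\binom{n-s-1}{r-1}$ counts strings of length $n-s=(1-\tau)n$ with $r$ runs (cf. the run-counting in the proof of Theorem \ref{thm:qary}); by Stirling's approximation of the binomial coefficient it has exponential order $q^{nN(\rho;\tau)}$, where the factor $(q-1)^{r-1}$ supplies exactly the $\rho\log_q(q-1)$ piece and the binomial supplies the $-x\log_q x-(1-x)\log_q(1-x)$ piece of $h_q$, so that the exponent is precisely $N(\rho;\tau)=(1-\tau)h_q(\rho/(1-\tau))$.

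The crux is identifying the exponent of the denominator, which is the Liron--Langberg lower bound \eqref{eq:lllb} on $|D_s(x)|$. I would analyze its two pieces separately. Each $\delta(r',s')=\sum_{i=0}^{s'}\binom{r'-s'}{i}$ from \eqref{eq:delta} is a partial binomial sum whose exponential order is governed by $\binom{r'-s'}{s'}$ when $s'<(r'-s')/2$ and by $2^{r'-s'}$ otherwise; this dichotomy is exactly the source of the truncation $\min(\cdot,\half)$ inside the binary entropy $h$ appearing in $D(\rho;\tau)$. The leading term $\delta(r,s)$ yields the value $\mu=\tau$ of the inner maximization, contributing exponent $\tfrac{(\rho-\tau)}{\log_2 q}h(\min(\tfrac{\tau}{\rho-\tau},\half))$, while the correction $\sum_i\delta(r-2,i)$ ranges over $i=\mu n$ with $\mu\in[m_{\tau,\rho},\min(\tau,\rho)]$ --- this is precisely the limit $s+r-(n-s)-1\leq i\leq \min(s-2,r-3)$ rewritten in normalized variables, giving $\mu\to 2\tau+\rho-1$ and $\mu\to\min(\tau,\rho)$ at the two endpoints --- each such term contributing $\tfrac{(\rho-\mu)}{\log_2 q}h(\min(\tfrac{\mu}{\rho-\mu},\half))$. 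Lower-bounding this sum of positive terms by its largest term (which is all that the $\leq$ direction of the theorem requires) shows the denominator has exponential order at least $q^{nD(\rho;\tau)}$, with $D(\rho;\tau)$ the maximum of these contributions over $\mu$.

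Finally I would apply the same Laplace principle to the outer sum over $r$: there are at most $n-s$ terms, contributing no exponent, so $\frac1n\log_q U_{q,s,n}$ equals, up to $o(1)$, the maximum over $r$ of (numerator exponent $-$ denominator exponent), and continuity of $N(\cdot;\tau)$ and $D(\cdot;\tau)$ lets one replace $\max_r$ by $\max_{\rho\in[0,1-\tau]}$, yielding $R_q(\tau)\leq \max_\rho [N(\rho;\tau)-D(\rho;\tau)]$. I expect the main obstacle to be the bookkeeping in the denominator: correctly matching the summation limits of \eqref{eq:lllb} to the interval $[m_{\tau,\rho},\min(\tau,\rho)]$, producing the $\min(\cdot,\half)$ truncation from the partial binomial sums, and keeping the estimates uniform enough in $\rho$ to pass from a discrete maximum to the continuous one. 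One must also check that the low-order contributions --- the $r\leq 2$ terms (the second sum in \eqref{eq:explicit}) and terms with $r<s$, where $\delta(r,s)=0$ --- do not dominate; both have exponent $0$, matching $N(0;\tau)-D(0;\tau)=0$, so they are absorbed into the maximum.
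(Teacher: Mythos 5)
Your proposal follows essentially the same route as the paper's appendix proof: dispose of $\tau\geq\half$ with the trivial bound \eqref{eq:trivial}, then extract the exponent of $U_{q,\tau n,n}$ by setting $r=\rho n$, identifying the numerator exponent $N(\rho;\tau)$ via Stirling, and obtaining the denominator exponent as a maximum over $\mu=i/n\in[m_{\tau,\rho},\min(\tau,\rho)]$ of the $\delta$-terms, with the $\min(\cdot,\half)$ truncation arising from the two regimes of the partial binomial sum. Your observation that the leading term $\delta(r,s)$ corresponds to the choice $\mu=\tau$ in the inner maximization is exactly how the paper argues that $D_2$ dominates $D_1$, so the two arguments coincide.
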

The proof is standard, but messy. We have relegated it to the Appendix.

Some remarks about this bound on $R_q(\tau)$ are worth noting. Fig \ref{fig:rate} contains plots of this bound pertaining to  various  alphabet sizes for $\tau \in [0,\half)$. For $\tau=0$, $D(\rho;\tau)=0$ and hence $R_q(0) \leq \max_{\rho \in [0,1]}h_q(\rho)=1$, which is expected. Thereafter for small values of $\tau$ (say $\tau \leq 1/10$), one finds that the rate drops quite sharply. For $\tau \geq \half$, the above bound says $R_q(\tau) \leq 1-\tau$ and so 
 $R_q(1) = 0,$ as expected.  One can easily see that this bound on the rate function is superior to Levenshtein's from~\cite{levenshtein02bounds}. 
 
 However there are obvious shortcomings to our bound.  Notice in Fig \ref{fig:rate} that our bound never hits zero for any $\tau \in [0,\half)$; in fact  it becomes zero only for $\tau=1$. Independently of his bound, Levenshtein~\cite{levenshtein02bounds} argues that $R_q(\tau)$ must be zero for all $\tau \geq \frac{q-1}{q}$. Our bound does not imply this property (Levenshtein's bound on the rate function also does not imply this property). Furthermore, in each of the plots in Fig \ref{fig:rate}, our bound shows an \textit{increase} beyond a certain value of $\tau$. The true asymptotic rate function $R_q(\tau)$ must decrease monotonically with $\tau$. This indicates that our bound becomes vacuous after a certain value of $\tau.$

A fascinating lesson in this is that a non-asymptotic bound such as $U_{q,s,n}$ that yields good asymptotics in one regime may not necessarily do so in other regimes.
\begin{figure}\begin{center}
\includegraphics[scale=.26,clip=true, trim=.6in .3in 1.1in 0.6in]{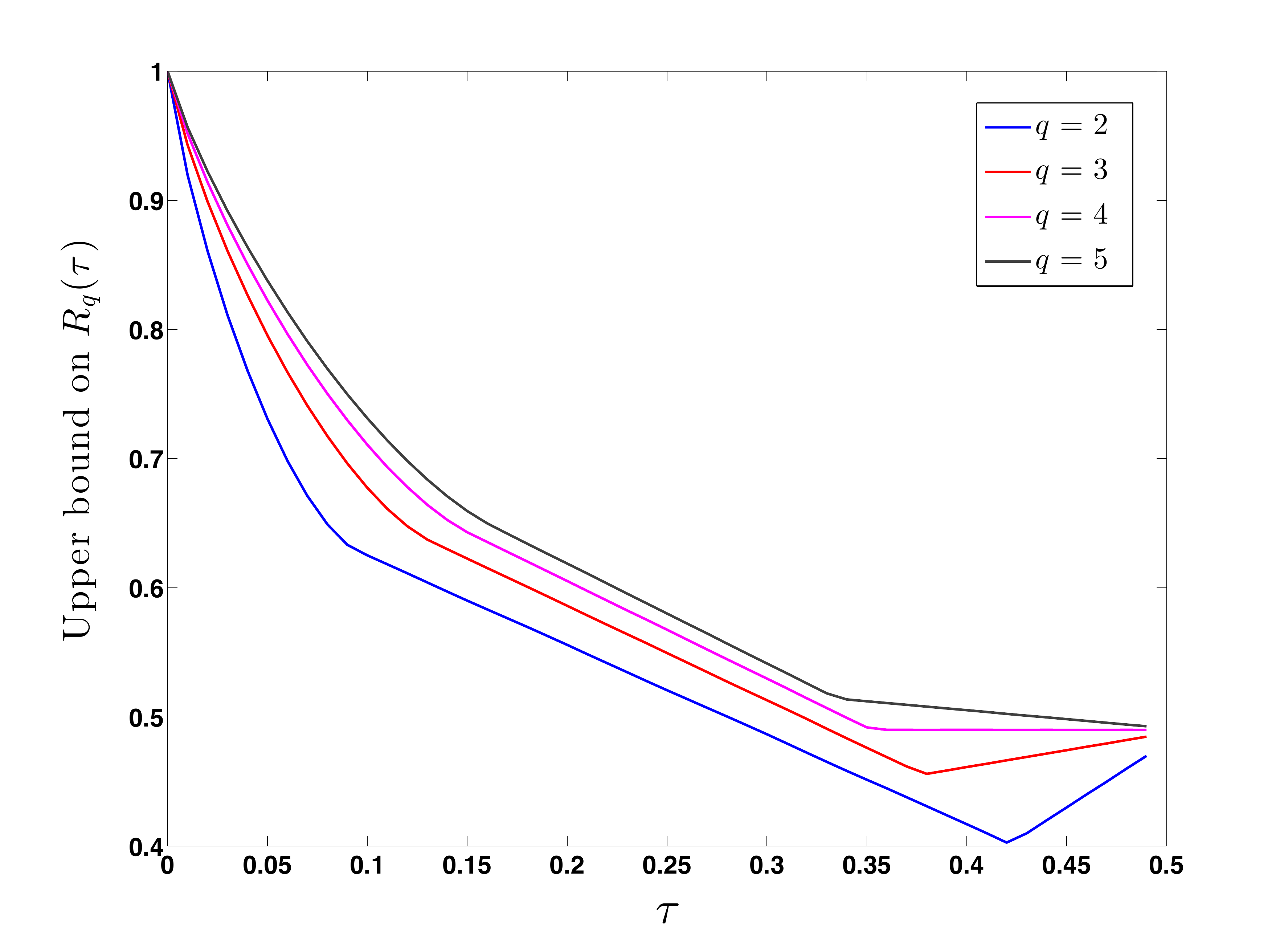}
\end{center}
\caption{The upper on the asymptotic rate function $R_q(\tau)$ guaranteed by Theorem \ref{thm:rate} for alphabet sizes $q=2,\hdots,5$ and $\tau \in [0,\half)$.} \label{fig:rate}
\end{figure}

\section{Bounds on codes for constrained sources} \label{sec:constrained}
The bounds obtained in the previous sections pertain to sizes of codebooks for the set of \textit{all} strings of a particular string length and from a particular alphabet. We now consider the case where a codebook is sought for a constrained set of source strings in $\Fbb_q^n$ and extend the results obtained above to present bounds for such codes.

\begin{definition}
Let $S\subseteq \Fbb_q^n$ be a set of strings and $s\in \Nbb$. An $s$-deletion correcting code or $s$-deletion codebook for $S$, is a subset $C \subseteq S$ such that 
the sets $D_s(x), x \in C$, are pairwise disjoint. The largest such code is denoted $\Cscr^*_{S,s}$ and called the optimal $s$-deletion correcting code or optimal $s$-deletion codebook for $S.$
\end{definition}

Finding a bound on the optimal codebook for an arbitrary set of strings $S$ is significantly more challenging than finding one when $S=\Fbb^n_{q}$. Specifically, arguments 
such as those based on Stirling's approximation employed by Levenshtein~\cite{levenshtein66binary} and Tenengolts~\cite{tenengolts84nonbinary} rely on the availability of all strings in $\Fbb^n_q$. 

We construct our bound by using a suitable hypergraph. Let $S \subseteq \Fbb_q^n$ and define the hypergraph 
\[\Hscr_{S,s}^\Dsf = \left(D_s(S), \{D_s(x): x \in S\}\right),\]
where $D_s(S)=\bigcup_{x\in S} D_s(x)$. $\Hscr_{S,s}^\Dsf$ is the partial hypergraph of $\Hdqsn$ generated by $S$. 
By arguments similar to those previously used, it follows that $\nu(\Hscr_{S,s}^\Dsf)=|\Cscr^*_{S,s}|.$ This matching problem for $\Hscr_{S,s}^\Dsf$ can be explicitly written as follows.
$$ \maxproblemsmall{$|\Cscr^*_{S,s}|=\ $}
        {z}
        {\sum_{y \in S} z(y)}
                                 {\hspace{-1cm}\begin{array}{r@{\ }c@{\ }ll}
\sum_{y \in I_s(x)\cap S} z(y) &\leq & 1, & \forall  x\in D_s(S),\\
z(y)                                    &\in & \Zbb_+,  & \forall y \in S.                              
        \end{array}} 
$$
Notice that in the constraint, the sum is over $y$ belonging to $I_s(x)\cap S$; this is because there may be a case where for some $x \in D_s(S)$, not all strings in $I_s(x)$ are present in $S$, and may thereby not correspond to a hyperedge in $\Hscr_{S,s}^\Dsf$. 
In the language of graphs, the codebook $\Cscr^*_{S,s}$ is a maximum independent set in $L_{S,s}$, the subgraph of $L_{q,s,n}$ induced by strings in $S$. As before, it is easy to see that $L_{S,s}$ is the line graph of $\Hscr_{S,s}^\Dsf$. 

In constructing our bound we exploit the ``decoupling'' afforded by the fractional transversal problem for $\Hscr^\Dsf_{S,s}$. This problem can be explicitly written as follows. 
$$ \problemsmall{$\tau^*(\Hscr^\Dsf_{S,s})=\ $}
        {w}
        {\sum_{x \in D_s(S)} w(x)}
                                 {\hspace{-1cm}\begin{array}{r@{\ }c@{\ }ll}
\sum_{x \in D_s(y)} w(x) &\geq & 1, & \forall  y\in S,\\
w(x)                                    &\geq & 0,  & \forall x \in D_s(S).                              
        \end{array}} 
$$
In this problem there is a separate constraint for each hyperedge, \ie for each string in $S$. Consequently, a fractional transversal can be constructed for $\Hscr^\Dsf_{S,s}$ for any set $S$ by applying the logic used in Theorem \ref{thm:sqary}. 
\begin{theorem} \label{thm:arbit}
Let $q,s,n \in \Nbb, n>s$ and let $S$ be a set of strings in $\Fbb_q^n$. Then 
\begin{equation}
 \frac{|S|}{\binom{n+s-1}{s}\iota_{q,s,n}} \leq |\Cscr^*_{S,s}| \leq \sum_{x \in D_s(S)} \frac{1}{|D_s(x)|}. \label{eq:arbit}
\end{equation}
\end{theorem}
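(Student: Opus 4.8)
The plan is to prove the two inequalities separately: the upper bound by reusing the fractional-transversal construction of Theorem \ref{thm:sqary} verbatim on the partial hypergraph $\Hscr^\Dsf_{S,s}$, and the lower bound by a degree/greedy argument on the line graph $L_{S,s}$. Both rely on the identities $|\Cscr^*_{S,s}|=\nu(\Hscr^\Dsf_{S,s})=\alpha(L_{S,s})$ recorded just before the statement.

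For the upper bound I would define $w(x)=\frac{1}{|D_s(x)|}$ for every $x\in D_s(S)$. This is nonnegative, and for each hyperedge, i.e. each $y\in S$, the monotonicity relation \eqref{eq:delmon} gives $|D_s(x)|\le |D_s(y)|$ for all $x\in D_s(y)$ (because $y\in I_s(x)$), whence
\[\sum_{x\in D_s(y)} w(x)=\sum_{x\in D_s(y)}\frac{1}{|D_s(x)|}\ge \frac{|D_s(y)|}{|D_s(y)|}=1.\]
Thus $w$ is feasible for the fractional transversal problem of $\Hscr^\Dsf_{S,s}$, and Lemma \ref{lem:duality} yields $|\Cscr^*_{S,s}|=\nu(\Hscr^\Dsf_{S,s})\le \sum_{x\in D_s(S)} w(x)=\sum_{x\in D_s(S)}\frac{1}{|D_s(x)|}$. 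The only subtlety is that the feasibility constraints now range over $y\in S$ only, which is precisely why the same $w$ continues to work on the partial hypergraph.

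For the lower bound I would invoke the elementary fact that a graph on $N$ vertices with maximum degree $\Delta$ contains an independent set of size at least $N/(\Delta+1)$ (greedily select a vertex, delete it together with its neighbors, and iterate). Here $N=|S|$, so it suffices to bound the degree in $L_{S,s}$ of an arbitrary $y\in S$. By Lemma \ref{lem:insd}, a string $y'$ is adjacent to $y$ exactly when $D_s(y)\cap D_s(y')\neq\emptyset$, and any such $y'$ shares a common deletion $x\in D_s(y)$ with $y$, hence lies in $I_s(x)$. Therefore all neighbors of $y$, together with $y$ itself, belong to $\bigcup_{x\in D_s(y)} I_s(x)$, and
\[\Big|\bigcup_{x\in D_s(y)} I_s(x)\Big|\le \sum_{x\in D_s(y)} |I_s(x)| = |D_s(y)|\,\iota_{q,s,n}\le \binom{n+s-1}{s}\iota_{q,s,n},\]
where the equality uses the length-independence of insertion-set sizes \eqref{eq:inssize}–\eqref{eq:iqn} and the last step uses Levenshtein's bound \eqref{eq:levdelub} together with $r(y)\le n$. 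Discarding $y$ itself gives $\Delta(L_{S,s})+1\le \binom{n+s-1}{s}\iota_{q,s,n}$, and the greedy bound then produces $|\Cscr^*_{S,s}|=\alpha(L_{S,s})\ge |S|/\big(\binom{n+s-1}{s}\iota_{q,s,n}\big)$.

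The feasibility check and the counting are both routine; I expect the crux to be the uniform degree estimate. Its whole force comes from the asymmetry between insertions and deletions: insertion sets all have the common size $\iota_{q,s,n}$ independent of the base string, while deletion sets vary in size, so one is forced to count neighbors as supersequences of deletions and then absorb the variability of $|D_s(y)|$ into the worst case $\binom{n+s-1}{s}$. Obtaining a degree bound that is uniform over $y$ is exactly what makes the resulting lower bound valid for an arbitrary constrained source $S$, and this is the step I would spend the most care on.
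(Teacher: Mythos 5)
Your proposal is correct and matches the paper's proof essentially step for step: the upper bound reuses the fractional transversal $w(x)=1/|D_s(x)|$ from Theorem \ref{thm:sqary} on the partial hypergraph (the paper makes exactly the point you flag, that the constraints range only over $y\in S$ so the same $w$ remains feasible), and the lower bound is the same greedy/maximum-degree argument on $L_{S,s}$ with $\Delta+1\le |D_s(y)|\,\iota_{q,s,n}\le \binom{n+s-1}{s}\iota_{q,s,n}$ via \eqref{eq:levdelub}. No gaps.
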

\begin{proof}
Notice that the fractional transversal problem for $\Hscr_{S,s}^\Dsf$ contains a constraint for each string $y$ belonging to $S$ and the sum in this constraint is over all 
$x \in D_s(y)$. Consequently, following Theorem \ref{thm:sqary}, we see that $w(x) = \frac{1}{|D_s(x)|}, x \in D_s(S)$, is a fractional transversal of $\Hscr_{S,s}^\Dsf$. The upper bound thus follows.
 
To obtain the lower bound consider  the line graph $L_{S,s}$ of $\Hscr_{S,s}^\Dsf$. The maximum independent set in $L_{S,s}$ is the optimal matching of $\Hscr_{S,s}^\Dsf$ and thereby the largest codebook $\Cscr_{S,s}^*$. A well known bound given by Brook's theorem or a ``greedy'' algorithm for independent set construction~\cite{west00introduction} gives that 
\[\alpha(L_{S,s}) = |\Cscr^*_{S,s}| \geq \frac{|S|}{\Delta(L_{S,s})+1},\]
where $\Delta(L_{S,s})$ is the maximum degree of a vertex in $L_{S,s}$. The neighborhood of a vertex $x$ in $L_{S,s}$ comprises of those strings obtained 
from $x$ by deletion of $s$ symbols in $x$ followed by the insertion of $s$ symbols in the resulting subsequence. Consequently, $\Delta(L_{S,s}) \leq \max_{x\in S,y\in D_s(S)} |D_s(x)||I_s(y)| -1\leq \binom{n+s-1}{s}\iota_{q,s,n}-1$, where we have used the upper bound on $|D_s(\cdot)|$ from \eqref{eq:levdelub}, $\iota_{q,s,n}$ was defined in \eqref{eq:iqn} as the size of the insertion set for strings in $\Fbb_q^{n-s}$,  and the subtracted $1$ is because the string itself is counted at least once while counting neighbors produced by deletion and insertion. The result follows.
\end{proof}
\subsection{Run-length limited sources}
In this section we will demonstrate the idea above by applying the results of Theorem \ref{thm:arbit} to the specific application of run-length limited codes. For simplicity we consider only the single-deletion case; but the idea is more general and can be extended readily to larger number of deletions. The background on these codes is sourced from the book chapter by Marcus, Roth and Siegel~\cite{marcus98constrained} and their extended monograph available online~\cite{marcus98constrainedmono}. 

Recordings on a magnetic tape when encoded into a binary string result in strings that have no adjacent $1$'s and the number of $0$'s between two consecutive $1$'s is constrained to be in a certain range. Let $0\leq d \leq k$. A binary string is said to satisfy a $(d,k)$-run-length limited (RLL) constraint if a) the string contains no adjacent $1$'s, \ie, the length of any $1$-run is unity,  b) the first and the last runs are $0$-runs and c) the length of any $0$-run is at least $d$ and at most $k$~\cite{marcus98constrained}. In \cite{marcus98constrainedmono}, the first and the last runs of $0$'s are allowed to have lengths less than $d$. In this section we assume, mainly for simplicity, that in a $(d,k)$-RLL string, the first and the last runs of the string must be $0$-runs also having length at least $d$.

 The problem of correcting errors in RLL strings has been considered by several authors (see~\cite[Chapter 9.5]{marcus98constrainedmono}) but most of these works consider erasure error or substitutions (see~\cite{bours94construction} and the discussion therein). Most works that consider deletion, consider the deletion of $0$'s only, since that is most relevant to the application (see, \eg, the discussion in~\cite{schulman99asymptotically}). Recently Cheng \etal~\cite{cheng12moment} and Palun\v{c}i\'{c} \etal~\cite{paluncic12multiple} have considered deletion errors in RLL strings for deletion of 0's and 1's. 

Assume that a set of RLL strings as defined above are to be transmitted through a single-deletion channel, wherein both $0$'s and $1$'s can be deleted. In the theorem below we derive a bound on the size of the largest codebook for a $(d,\infty)$-RLL set of strings. For $0\leq d \leq k$, by $S_n(d,k) \subseteq \Fbb_2^n$ we denote the set of binary strings of length $n$ satisfying the $(d,k)$-RLL constraint. First, we characterize $D_1(S_n(d,\infty))$.
\begin{lemma} \label{lem:sn} Let $n,d\in \Nbb$ and $1 <d \leq n.$ Then we have $D_1(S_n(d,\infty)) = S_{n-1}(d,\infty) \cup S'_{n-1}(d,\infty)$, where 
$S'_{n-1}(d,\infty)$ is the set of binary strings of length $n-1$ such that the first and last runs are $0$-runs, between exactly one pair of consecutive $1$'s there are exactly $d-1$ number of $0$'s and between all other pairs of consecutive $1$'s there are at least $d$ $0$'s.
\end{lemma}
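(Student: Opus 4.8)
The plan is to establish the set equality by proving the two inclusions separately, working throughout with the canonical run decomposition of an RLL string. Write any $x \in S_n(d,\infty)$ in the form $x = 0^{a_0} 1 0^{a_1} 1 \cdots 1 0^{a_m}$, where $m$ is the number of $1$'s and $a_0,\ldots,a_m \geq d$ are the lengths of its $0$-runs, subject to $\sum_{i=0}^m a_i + m = n$; the case $m=0$ is just $x = 0^n$. Every element of $D_1(x)$ arises by deleting a single symbol, which is either one of the $1$'s or a $0$ lying inside one of the runs $0^{a_i}$, and I would organize the forward inclusion $D_1(S_n(d,\infty)) \subseteq S_{n-1}(d,\infty) \cup S'_{n-1}(d,\infty)$ around exactly these cases.

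For the case analysis: deleting a $1$ merges the two adjacent $0$-runs $0^{a_{i-1}}$ and $0^{a_i}$ into a single run of length $a_{i-1}+a_i \geq 2d \geq d$, leaves every other run intact, and keeps the first and last runs as $0$-runs of length $\geq d$, so the result lies in $S_{n-1}(d,\infty)$. Deleting a $0$ from a run with $a_i > d$ shortens that run to length $a_i - 1 \geq d$ while preserving all constraints, so again the result lies in $S_{n-1}(d,\infty)$. The only remaining possibility is deleting a $0$ from a run of length exactly $d$, which produces a string in which precisely one $0$-run has the deficient length $d-1$ and all others retain length $\geq d$, i.e. an element of $S'_{n-1}(d,\infty)$. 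For the reverse inclusion I would argue constructively: given $z \in S_{n-1}(d,\infty)$, inserting one $0$ into its first run yields a parent $x \in S_n(d,\infty)$ with $z \in D_1(x)$; given $z \in S'_{n-1}(d,\infty)$, inserting one $0$ into the unique deficient run restores that run to length $d$ and again produces a legitimate parent $x \in S_n(d,\infty)$ with $z \in D_1(x)$. Here the hypothesis $d \geq 2$ is used so that a run of length $d-1$ is nonempty and the run structure stays well defined.

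The main obstacle, and the step demanding the most care, is the treatment of the first and last runs. When the deficient run created by a $0$-deletion is a \emph{boundary} run, the resulting string fails the RLL constraint (its leading or trailing $0$-run is too short) yet genuinely belongs to $D_1(S_n(d,\infty))$, so it must be absorbed into $S'_{n-1}(d,\infty)$ rather than $S_{n-1}(d,\infty)$. Thus $S'_{n-1}(d,\infty)$ must be understood as collecting all length-$(n-1)$ strings in which exactly one $0$-run, \emph{internal or at a boundary}, has the deficient length $d-1$ while the rest have length $\geq d$; the degenerate regime $d=n$, where $S_n(d,\infty) = \{0^n\}$ and $D_1$ yields $0^{n-1}$ with no pair of consecutive $1$'s at all, makes this boundary bookkeeping unavoidable. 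The crux of the argument is then to verify that this accounting is exact: that $S_{n-1}(d,\infty)$ and $S'_{n-1}(d,\infty)$ together capture every one-deletion image and admit no extraneous strings, which follows by matching each of the deletion cases above against the defining conditions of the two sets.
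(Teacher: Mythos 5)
Your proof is correct and follows essentially the same route as the paper's: the forward inclusion by the same three-way case analysis on whether the deleted symbol is a $1$, a $0$ from a run of length $>d$, or a $0$ from a run of length exactly $d$, and the reverse inclusion by inserting a $0$ into a suitable $0$-run (the paper inserts into the shortest run, which handles both $S_{n-1}(d,\infty)$ and $S'_{n-1}(d,\infty)$ uniformly). Your remark about boundary runs is a legitimate and worthwhile observation: as literally worded, the definition of $S'_{n-1}(d,\infty)$ only permits the deficient run to sit between two consecutive $1$'s, yet deleting a $0$ from a first or last run of length exactly $d$ (or the degenerate case $x=0^n$ when $d=n$) produces a string whose deficient run is a boundary run, so the broader reading you adopt is the one actually needed --- and it is the one the paper itself uses implicitly, since its subsequent counting allows $r+1$ choices for the deficient $0$-run.
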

\begin{proof}
``$\subseteq$'': Consider a string in $S_n(d,\infty)$. A deleted symbol must be a $0$ or a $1$. 
%We consider them below. 
\begin{enumerate}
\item If a $0$ is deleted there are two possibilities: either the run from which it is deleted has length $d$, or it has length $>d$. In the former case, the subsequence lies in $S'_{n-1}(d,\infty)$, while in the latter case, it lies in $S_{n-1}(d,\infty)$. 
\item If a $1$ is deleted, the $0$-runs adjacent to the deleted $1$ join to form a longer run of length at least $2d$; the subsequence thus lies in $S_{n-1}(d,\infty)$.
\end{enumerate}
This shows that in either case, $D_1(S_n(d,\infty)) \subseteq S_{n-1}(d,\infty) \cup S'_{n-1}(d,\infty).$

``$\supseteq$'': To show the opposite inclusion, it suffices to show that for any string $x \in S_{n-1}(d,\infty) \cup S'_{n-1}(d,\infty)$ there exists a string $y\in S_n(d,\infty)$ such that $y \in I_1(x)$. Consider an arbitrary $x \in S_{n-1}(d,\infty) \cup S'_{n-1}(d,\infty)$. Insert a $0$ in the shortest $0$-run of $x$ and call the resulting string $y$. Since $x$ has at most one $0$-run of length $d-1$, 
it follows that 
%each $0$-run of $y$ has length at least $d$. Furthermore, since $d>1$, $x$ has no adjacent $1$'s and thereby $y$ has no adjacent $1$'s. Clearly, 
$y$ lies in $S_n(d,\infty)$. 
\end{proof}

Using this lemma and Theorem \ref{thm:arbit}, we will prove an upper bound on the size of a code for $S_n(d,\infty)$. 
\begin{theorem}
Let $n,d \in \Nbb$, $1<d \leq n$. The optimal codebook for $S_n(d,\infty)$, $\Cscr_{S_n(d,\infty)}^*$, satisfies
\begin{align}
|\Cscr_{S_n(d,\infty)}^*| \leq \sum_{r=0}^{\bar{r}}\binom{n-2-r-(d-1)(r+1)}{r}.\frac{1}{2r+1} \non  \\
+ \sum_{r=1}^{\bar{r}'} (r+1)\binom{n-2-r -(d-1)(r+1)}{r-1}. \frac{1}{2r+1}, \label{eq:rll}
\end{align}
where $\bar{r}=\lfloor \frac{n-1-d}{d+1}\rfloor$ and $\bar{r}'=\lfloor\frac{n-d}{d+1}\rfloor$.
\end{theorem}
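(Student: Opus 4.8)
The plan is to invoke the upper bound of Theorem \ref{thm:arbit} in the single-deletion case ($q=2$, $s=1$) with $S = S_n(d,\infty)$, and then to evaluate the resulting sum in closed form using the characterization of $D_1(S_n(d,\infty))$ from Lemma \ref{lem:sn}. Theorem \ref{thm:arbit} gives
\[|\Cscr_{S_n(d,\infty)}^*| \leq \sum_{x \in D_1(S_n(d,\infty))} \frac{1}{|D_1(x)|} = \sum_{x \in D_1(S_n(d,\infty))} \frac{1}{r(x)},\]
where the second equality is the identity $|D_1(x)| = r(x)$ from \eqref{eq:delsize}. By Lemma \ref{lem:sn} the vertex set decomposes as $D_1(S_n(d,\infty)) = S_{n-1}(d,\infty) \cup S'_{n-1}(d,\infty)$, and this union is disjoint: every string in $S_{n-1}(d,\infty)$ has all its $0$-runs of length at least $d$, whereas every string in $S'_{n-1}(d,\infty)$ has a $0$-run of length exactly $d-1$. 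Hence the sum splits as $\sum_{x \in S_{n-1}(d,\infty)} \tfrac{1}{r(x)} + \sum_{x \in S'_{n-1}(d,\infty)} \tfrac{1}{r(x)}$, and it remains only to evaluate each piece.

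The observation that drives the evaluation is that every string in either set begins and ends with a $0$-run and has no two adjacent $1$'s, so a string with exactly $r$ ones has precisely $r+1$ $0$-runs and therefore $r(x) = 2r+1$ runs in total; in particular $r(x)$ is odd and supplies the factor $\tfrac{1}{2r+1}$. I would therefore organize each sum by the number of ones $r$, count the strings, and multiply by $\tfrac{1}{2r+1}$. For $S_{n-1}(d,\infty)$, a string with $r$ ones has $n-1-r$ zeros distributed among $r+1$ runs each of length at least $d$; by Lemma \ref{lem:eqsol} the number of such distributions is $\binom{n-2-r-(d-1)(r+1)}{r}$, yielding the first summand, with the coefficient nonzero exactly for $0 \leq r \leq \bar{r}$. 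For $S'_{n-1}(d,\infty)$, exactly one of the $r+1$ $0$-runs has length $d-1$ and the remaining $r$ runs have length at least $d$; choosing the special run in $r+1$ ways and distributing the remaining $n-d-r$ zeros among $r$ runs each of length at least $d$, again via Lemma \ref{lem:eqsol}, gives $(r+1)\binom{n-2-r-(d-1)(r+1)}{r-1}$, nonzero exactly for $1 \leq r \leq \bar{r}'$. Adding the two contributions reproduces \eqref{eq:rll}.

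The step I expect to be the main obstacle is the bookkeeping for $S'_{n-1}(d,\infty)$. One must recognize that the $0$-run of length $d-1$ is allowed to be \emph{any} of the $r+1$ $0$-runs, including the first or last run, rather than only an interior run strictly between two consecutive $1$'s as a literal reading of Lemma \ref{lem:sn} might suggest; this is exactly what produces the multiplicity $r+1$ and should be cross-checked against the proof of Lemma \ref{lem:sn}, where deleting a $0$ from a boundary $0$-run of length exactly $d$ does produce a boundary run of length $d-1$. The remaining care lies in the summation limits: the hypothesis $dk \leq n$ of Lemma \ref{lem:eqsol} becomes $r(d+1) \leq n-1-d$ for the first count and $r(d+1) \leq n-d$ for the second, which are precisely the inequalities defining $\bar{r} = \lfloor \tfrac{n-1-d}{d+1}\rfloor$ and $\bar{r}' = \lfloor \tfrac{n-d}{d+1}\rfloor$. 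Verifying that the binomial coefficients vanish outside these ranges confirms that no terms are lost or double-counted, completing the derivation of the bound.
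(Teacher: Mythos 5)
Your proposal is correct and follows essentially the same route as the paper: apply Theorem \ref{thm:arbit} together with $|D_1(x)|=r(x)$, split the vertex set via Lemma \ref{lem:sn} into the disjoint pieces $S_{n-1}(d,\infty)$ and $S'_{n-1}(d,\infty)$, and count strings in each piece by their number of runs using Lemma \ref{lem:eqsol}, with the factor $r+1$ accounting for the choice of the short $0$-run. Your side remark that the run of length $d-1$ may be a boundary run is consistent with how the paper itself counts in the proof (it takes the multiplicity to be $r+1$ over all $0$-runs), so no adjustment is needed.
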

\begin{proof}
From  \eqref{eq:arbit} and the size of single-deletion sets stated in \eqref{eq:delsize}, $|\Cscr^*_{S_n(d,\infty)}| \leq \sum_{x \in D_1(S_n(d,\infty))} \frac{1}{r(x)}.$ 
%We calculate the right hand side of this inequality. 
By Lemma \ref{lem:sn}, $D_1(S_n(d,\infty)) = S_{n-1}(d,\infty) \cup S'_{n-1}(d,\infty)$. Notice that by definition of $S_{n-1}'(d,\infty),$ the sets $S_{n-1}(d,\infty)$ and $S_{n-1}'(d,\infty)$ are disjoint. Therefore, 
\begin{equation}
|\Cscr^*_{S_n(d,\infty)}|  \leq \sum_{x \in S_{n-1}(d,\infty)} \frac{1}{r(x)}+ \sum_{x\in S_{n-1}'(d,\infty)}\frac{1}{r(x)}. \label{eq:rll2}
\end{equation}
Since all $0$-runs of a string in $S_{n-1}(d,\infty)$ have length at least $d$ and all $1$-runs have unit length, 
and the starting and ending runs are $0$-runs, any string in $S_{n-1}(d,\infty)$ has an odd number of runs and at most $2\bar{r}+1$ runs, where $\bar{r}$ is as stated in the theorem. Therefore a string in $S_{n-1}(d,\infty)$ with, say $2r+1$ runs,  has $r$  $1$-runs of unit length and $r+1$  $0$-runs of lengths say $\ell_1,\hdots,\ell_{r+1}$, where each $\ell_i \geq d.$ The number of strings with $2r+1$ runs in $S_{n-1}(d,\infty)$ is thus equal to the number of integral solutions $(\ell_1,\hdots,\ell_{r+1})$ of  
\begin{align*}
\sum_{i=1}^{r+1} \ell_i = n-1-r, \qquad \ell_i \geq d, 1 \leq i \leq r+1.
\end{align*}
By Lemma \ref{lem:eqsol} this number is $\binom{n-2-r-(d-1)(r+1)}{r}$, whereby the first term in the right hand side of \eqref{eq:rll2} equals the first term in the right hand side of \eqref{eq:rll}. 

Each string in $S'_{n-1}(d,\infty)$ also has odd number of runs. Furthermore, it has at least three runs and at most $2\bar{r}'+1$ runs, where $\bar{r}'$ is defined in the statement of the theorem. Consider a string with $2r+1$ runs with $r$ $1$-runs and $r+1$ $0$-runs. First choose the $0$-run with length $d-1$; this can be chosen in $r+1$ ways. Let $\ell_1,\hdots,\ell_{r}$ be the lengths of the remaining $0$-runs. The number of choices for the lengths of the remaining runs is the number of integral solutions of
\[\sum_{i=1}^{r} \ell_i = n-1-r-(d-1), \qquad \ell_i \geq d, 1 \leq i \leq r.\]
Using Lemma \ref{lem:eqsol}, the number of strings in $S_{n-1}'(d,\infty)$ with $2r+1$ runs is thus $(r+1)\binom{n-2-r -(d-1)r -(d-1)}{r-1}$. This proves that the 
second term in \eqref{eq:rll} equals its counterpart in \eqref{eq:rll2}.
\end{proof}
Unfortunately, calculating these bounds in a simplified closed form does not appear to be easy. Our aim in this section was only to demonstrate the idea and the bound in Theorem \ref{thm:arbit}. Exact calculation of these bounds is beyond the scope of this paper. 

With this we conclude the theoretical portion of the paper. In the following sections we will study how our bounds compare numerically with the sizes of known codebooks and with other bounds.
\section{Numerical results}\label{sec:numeric}
\begin{table}[htbp]
\centering
\subfloat[$q=2$, binary\label{tab:2}]{
\begin{tabular}{c|cccc}
$n$ & $\lfloor \texttt{Lev-UB}\rfloor$ & $\lfloor \frac{2^n-2}{n-1}\rfloor$ & $\lfloor \texttt{LP-UB}\rfloor$ & $|{\rm VT}_0(n)|$ \\[1ex]
\hline
1	&	1	&	--	&	1	&	1	\\
2	&	3	&	2	&	2	&	2	\\
3	&	4	&	3	&	2	&	2	\\
4	&	6	&	4	&	4	&	4	\\
5	&	10	&	7	&	6	&	6	\\
6	&	18	&	12	&	10	&	10	\\
7	&	34	&	21	&	17	&	16	\\
8	&	58	&	36	&	30	&	30	\\
9	&	103	&	63	&	53	&	52	\\
10	&	190	&	113	&	96	&	94	\\
11	&	363	&	204	&	175	&	172	\\
12	&	646	&	372	&	321	&	316	\\
13	&	1182	&	682	&	593	&	586	\\
14	&	2232	&	1260	&	1104	&	1096	
\end{tabular}
}

\subfloat[$q=3$]{
\centering
\begin{tabular}{c|cccc}
$n$ & $\lfloor\texttt{Lev-UB} \rfloor$ &$\lfloor \frac{q^n-q}{(n-1)(q-1)}\rfloor$ & $\lfloor \texttt{LP-UB}\rfloor$ & $|\texttt{Tenengolts}|$ \\[1ex]
\hline
1	&	1	&	--	&	1	&	1	\\
2	&	4	&	3	&	3	&	2	\\
3	&	7	&	6	&	5	&	5	\\
4	&	16	&	13	&	12	&	8	\\
5	&	43	&	30	&	24	&	17	\\
6	&	114	&	72	&	62	&	46	\\
7	&	282	&	182	&	153	&	105	\\
8	&	774	&	468	&	402	&	278	
\end{tabular}
\label{tab:3}
}

\centering
\subfloat[$q=4$ \label{tab:4}]{\begin{tabular}{c|cccc}
$n$ & $\lfloor \texttt{Lev-UB}\rfloor$ &$\lfloor \frac{q^n-q}{(n-1)(q-1)}\rfloor$ & $\lfloor \texttt{LP-UB} \rfloor$ & $|\texttt{Tenengolts}|$ \\[1ex]
\hline
1	&	1	&	--	&	1	&	1	\\
2	&	6	&	4	&	4	&	3	\\
3	&	12	&	10	&	8	&	6	\\
4	&	36	&	28	&	25	&	20	\\
5	&	132	&	85	&	69	&	52	\\
6	&	405	&	272	&	231	&	178	
\end{tabular}}

\subfloat[$q=5$ \label{tab:5}]{\begin{tabular}{c|cccc}
$n$ & $\lfloor \texttt{Lev-UB}\rfloor$ &$\lfloor \frac{q^n-q}{(n-1)(q-1)}\rfloor$ & $\lfloor \texttt{LP-UB}\rfloor$ & $|\texttt{Tenengolts}|$ \\[1ex]
\hline
1	&	1	&	--	&	1	&	1	\\
2	&	7	&	5	&	5	&	3	\\
3	&	17	&	15	&	11	&	9	\\
4	&	67	&	51	&	45	&	33	\\
5	&	293	&	195	&	158	&	129	\\
6	&	1146	&	781	&	657	&	527	
\end{tabular}
}
\caption{The columns of the table show, from left to right, the value of Levenshtein's bound from \eqref{eq:leven} (\texttt{Lev-UB}), values of upper bound obtained in Theorem \ref{thm:qary}, the fractional matching number $\nu^*(\Hdqn)$ (\texttt{LP-UB}),  and the sizes of best known codes, for values of $q$ and $n$. For binary alphabet, the best known codes are the Varshamov-Tenengolts codes ${\rm VT}_0(n)$~\cite{varshamov65codes,levenshtein66binary}. For larger alphabet, the best codes known to us are those of Tenengolts~\cite{tenengolts84nonbinary}, whose size is denoted $|\texttt{Tenengolts}|$.} \label{table}
\end{table}

Recall that the upper bounds guaranteed by Theorems  \ref{thm:qary}, \ref{thm:sqary} and \ref{thm:arbit} were obtained by constructing a fractional transversal for the hypergraphs involved. To obtain an upper bound on the size of optimal codebooks for the deletion channel, it suffices to find the fractional matching number itself, and ideally one would like to have an expression for this number. We were not able to find such an expression and constructed a fractional transversal as a proxy for it. 

In the case of a single deletion, there already exist codes which are known to be asymptotically good. This motivates a comparison between our bound for single-deletion correcting codes, the fractional matching number and the sizes of the best known codes in order to ascertain the quality of these codes. 
To do this, the fractional matching problem for hypergraph  $\Hdqn$ (for single deletions) was solved numerically on \Matlab for various values of $q$ and $n$. 
Table \ref{table} documents the results obtained. 

In each subtable of Table \ref{table}, the columns contain from left to right, the string length $n$, Levenshtein's upper bound (strongest one from \eqref{eq:leven}; denoted \texttt{Lev-UB}), the bound from Theorem \ref{thm:qary}, the value of the fractional matching number found numerically ($=\nu^*(\Hdqn);$ denoted \texttt{LP-UB}), and the best known code for each case. In the binary case the best known code is the Varshamov-Tenengolts code ${\rm VT}_0(n)$ where 
\[{\rm VT}_a(n) = \left\{x_1x_2\hdots x_n\in \Fbb_2^n \left\lvert \sum_i ix_i =a \bmod n+1\right.\right\}.\]
${\rm VT}_0(n)$  is also conjectured~\cite{sloane02single} to be optimal for all $n$. 
 For larger alphabet the best codes we know of are those of Tenengolts~\cite{tenengolts84nonbinary} (these are denoted $|\texttt{Tenengolts}|$). 
For each $q$ the largest value of $n$ is as far as we could compute with the resources available to us.

The first trend noticeable is that in any row values decrease from left to right. Thus the strongest of Levenshtein's bounds from \eqref{eq:leven} is weaker than our non-asymptotic bound. Our non-asymptotic bound is also weaker than the value of the fractional matching number (column \texttt{LP-UB}); this shows that the fractional transversal we have constructed to obtain the upper bound is not the optimal fractional transversal. 

Notice that in the binary case, shown in Table \ref{tab:2}, the size of the Varshamov-Tenengolts code ${\rm VT}_0(n)$ shows a good match with 
with \texttt{LP-UB}. This indicates that these codes are either optimal (as conjectured) or close to being optimal, at least for $n\leq 14$. Sloane's website~\cite{sloaneweb} carries numerically obtained bounds for $n\leq 11,$ 
of which ${\rm VT}_0(n)$ has been confirmed as optimal for $n\leq 10$.  The bounds on the website have been obtained by computing the Lov\'{a}sz $\vartheta$~\cite{west00introduction} on graphs $L_{q,1,n}$. The results in Table \ref{table} may be considered as additions to Sloane's compilation. 

For each value of $q,n,$ Tenengolts' construction gives a two-parameter family of codes (the parameters being $\beta,\gamma$ in \cite[Eq (2)]{tenengolts84nonbinary}). The column $\texttt{|Tenengolts|}$ contains for the respective $q,n,$ the largest code out of this family. Unlike in the VT codes where it is known that of the family ${\rm VT}_a(n), a =0,\hdots,n$, the code ${\rm VT}_0(n)$ is the largest, we are not aware of a similar characterization of the largest code from Tenengolts' family. Thus the column $\texttt{|Tenengolts|}$ was populated by explicitly calculating the size of the code for each value of the parameters and thereafter identifying the largest of those. It is clear from this table that these codes are quite smaller than the fractional matching number in \texttt{LP-UB}. This may mean either that there is a large gap between the fractional matching number and the matching number for these hypergraphs, or that the Tenengolts codes are not optimal.

For larger number of deletions there exist no good codes apart from those found by search. So no interesting comparisons can be made for an existing code for a larger number of deletions. However, we may compare our bound with Levenshtein's from \eqref{eq:leven}. Figure \ref{fig:levencomp} shows the comparison for binary alphabet and $s=2,3,4$ and $15\leq n \leq 30$. We have focused on this region of $n$ so as to allow the distinctions between the lines for $s=2,3,4$ coming from Levenshtein's bound to be clearly discerned; for smaller values of $n$ these lines overlap. One can easily eye-ball that our bound is significantly better than Levenshtein's. 
\begin{figure}
\begin{center}
\includegraphics[scale=.27, clip=true, trim=.8in .4in 1.15in .3in]{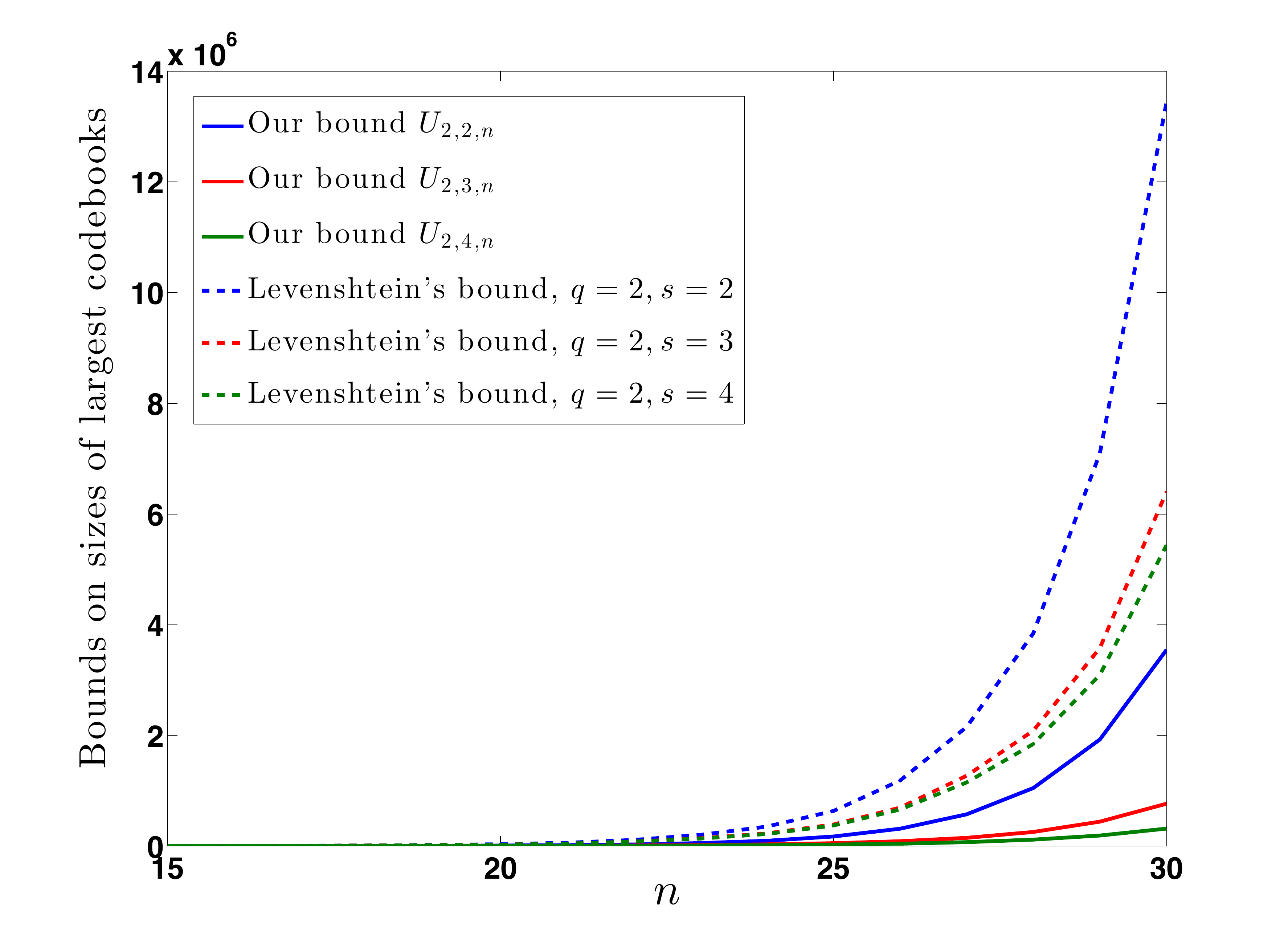}
\end{center}
\caption{Figure showing values of $U_{q,s,n}$ (solid lines) and Levenshtein's bound (dotted lines) from \eqref{eq:leven} for $q=2$, $s=2,3,4$ and $15\leq n \leq 30$.}\label{fig:levencomp}
\end{figure}

We discuss the quality of  our bound and prospects for improving it in the next section.

\section{Discussion}\label{sec:disc}
For the sake of this discussion, we limit ourselves to the case of the single-deletion channel. 
Table \ref{table} shows that there is scope for improving our bound $\frac{q^n-q}{(q-1)(n-1)}$ for the $q$-ary single-deletion channel. Since the bound is not equal to the fractional matching number \texttt{LP-UB},  one can obtain a better bound by merely finding a fractional transversal with a smaller weight. However, in practice a construction to this effect has eluded us. In fact, our constructed transversal shows a close match to the optimal fractional transversal found numerically, which makes any improvement challenging. We discuss this below.

Figure \ref{fig:f27} shows the optimal fractional transversal and the fractional transversal we have constructed ($w(\cdot) \equiv \frac{1}{r(\cdot)}$) for hypergraph $\Hscr_{2,1,n}^\Dsf$, \ie $q=2, n=8$ and $s=1$ and for hypergraph $\Hscr_{5,1,4}^\Dsf$ ($q=5,n=4, s=1$). Notice that in both cases, the constructed fractional transversal matches the general trend of the optimal fractional transversal. 
This continues to hold \remove{to} for larger values of $n$. Indeed, in the binary case, since 
\[0 \leq \frac{\frac{2^n-2}{n-1} - \nu^*(\Hscr_{2,1,n}^\Dsf)}{2^{n-1}} \leq \frac{\frac{2^n-2}{n-1} - \frac{2^n}{n+1}}{2^{n-1}} \rightarrow 0,\]
the average difference between the constructed and optimal transversal vanishes for large $n$. A tighter bound may be obtained by fine-tuning the constructed fractional transversal, but since the general trend of the optimal fractional transversal has already been captured by our constructed transversal, the logic for further fine-tuning is not obvious. Yet, this effort is not a lost cause: since the number of vertices grows 
exponentially, a small saving in this construction may imply a substantial improvement in the bound. 

\begin{figure}[htbp]
\centering
\subfloat[$\Hscr_{2,1,8}^\Dsf$]{\includegraphics[scale=.27, clip=true, trim = .9in .3in 1.1in 0.7in]{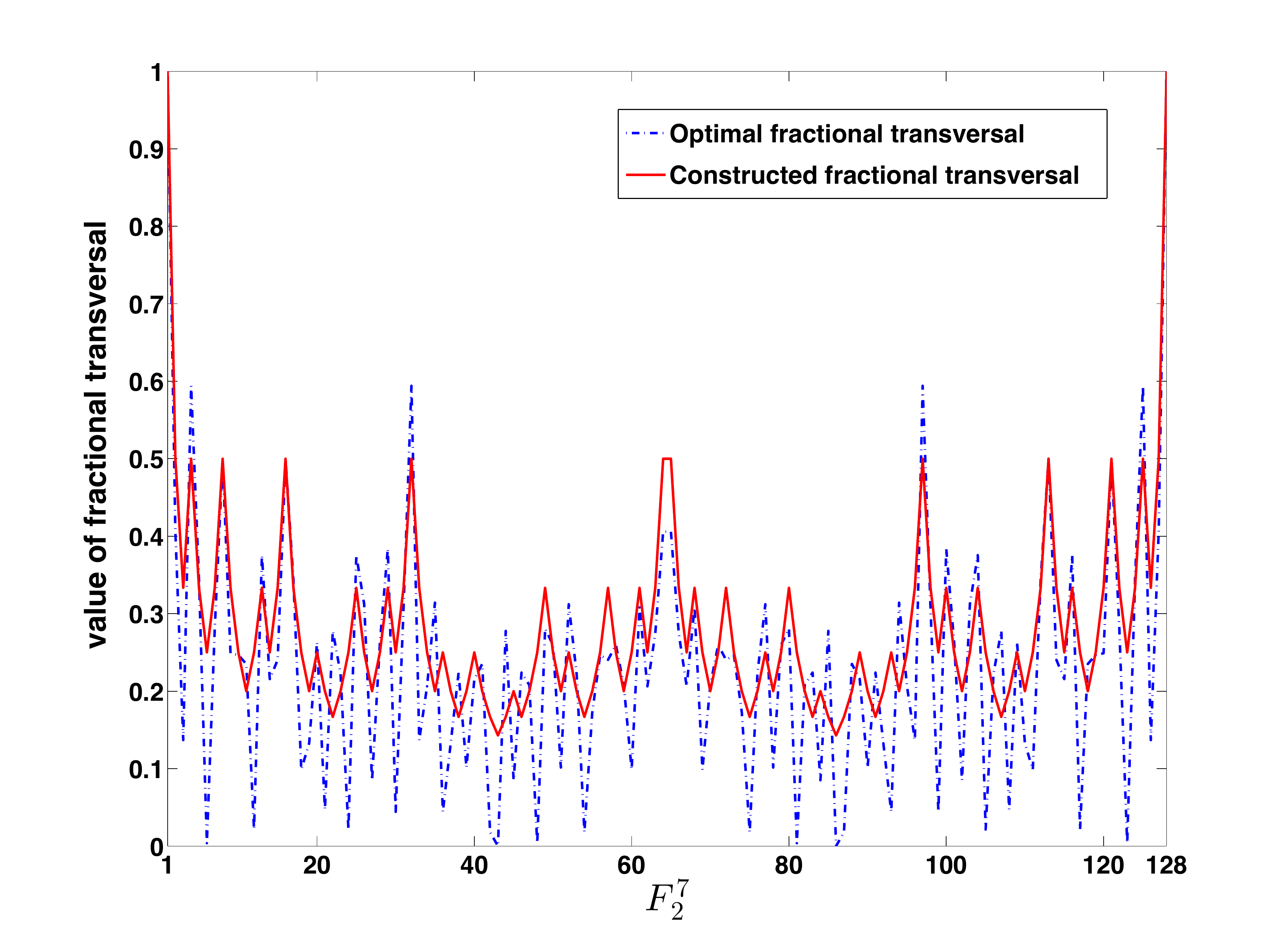}}

\subfloat[$\Hscr_{5,1,4}^\Dsf$]{\includegraphics[scale=.27, clip=true, trim = 0.9in .18in 1.1in .7in]{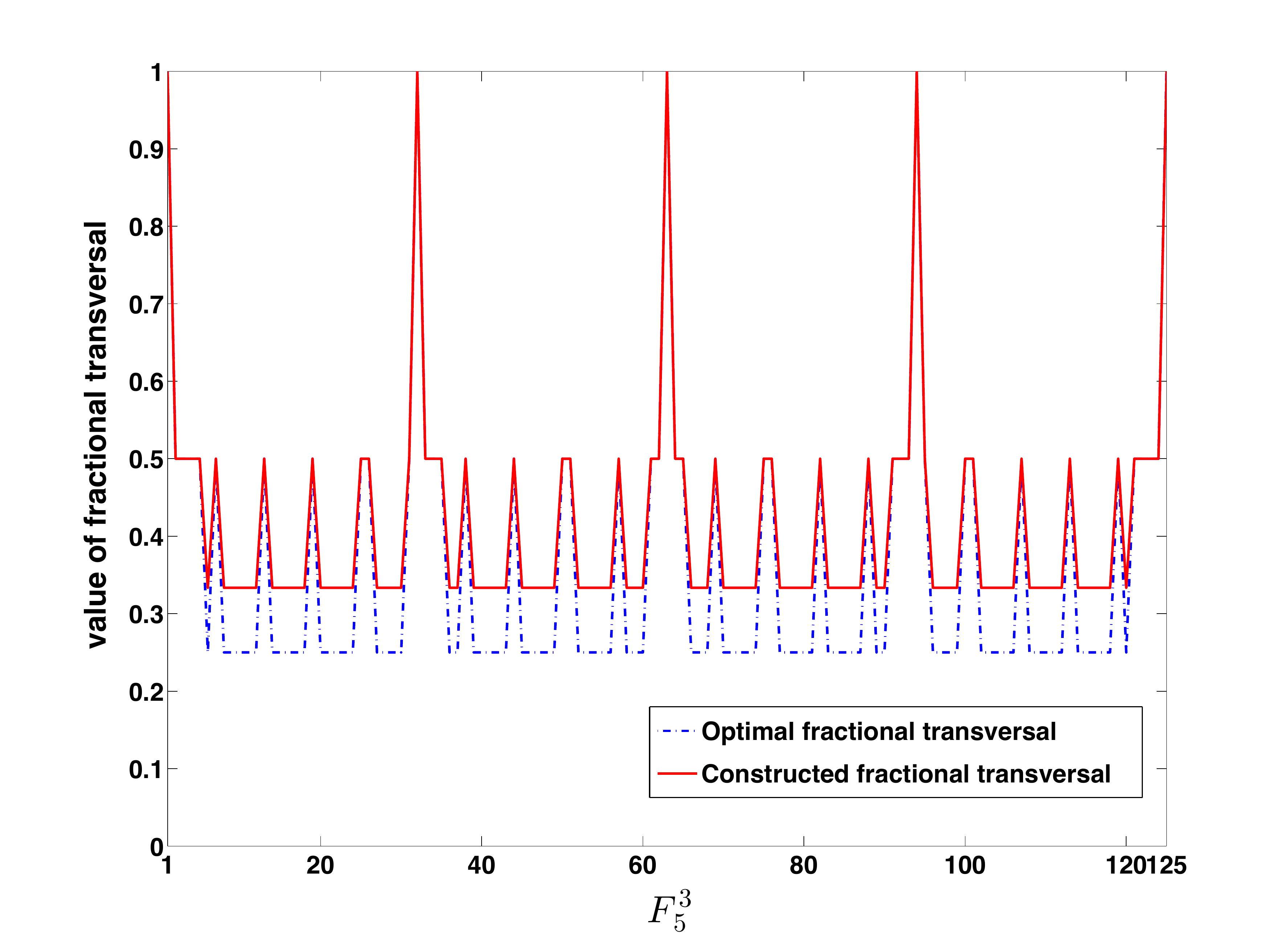}}
\caption{The horizontal axis consists of elements of $\mathbb{F}_2^7$ and $\Fbb_5^3,$ respectively, plotted in increasing order of their decimal value. The vertical axis is the value of the fractional transversals. In each case, the dotted line shows the optimal fractional transversal and the solid line shows the constructed fractional transversal $w(x) \equiv \frac{1}{r(x)}$ for $\Hscr_{2,1,8}^\Dsf$ and $\Hscr_{5,1,4}^\Dsf,$ respectively. These lines are provided to aid in discerning the trends in their values; they have no meaning per se.}
\label{fig:f27}
\end{figure}
\note{$\Fbb_5^4 \rightarrow \Fbb_5^3$?}
We end with one final consideration and speculate on what may be an alternative approach to obtaining better bounds. Since the most successful approaches to code construction for this problem have been number-theoretic one may be inclined to conjecture that the size of the optimal codebook $|\Cscr_{q,1,n}^*|$ depends not only on the numerical value of $n$, but also on properties $n$ has as a number. In the binary case, in particular, since the fractional matching number $\nu^*(\Hscr_{2,1,n}^\Dsf)$ closely tracks $|{\rm VT}_0(n)|$, which is given by a number-theoretic formula (see \cite[Eq (7)]{sloane02single}), it appears that $\nu^*(\Hscr_{2,1,n}^\Dsf)$ may also be given by a number-theoretic expression. In contrast, neither our bounds nor  their proofs have any number-theoretic character. Perhaps a clue to tightening these bounds lies in giving a number-theoretic construction of the optimal fractional matching or a better (possibly optimal) fractional transversal.

%The fractional transversal emerged as a natural concept due to the hypergraph formulation we have chosen. It is therefore not clear if the bound $q\sum_{r=1}^{n-1} \binom{n-2}{r-1}.\frac{(q-1)^{r-1}}{r}$ we have obtained is an artifact of this particular formulation or if it has a more fundamental interpretation for the problem. A related question is that of tightening these bounds.  
%\section{Conclusion} \label{sec:conc}
In summary, this paper considered the deletion channel for general $q$-ary alphabet and an arbitrary number of deletions and proved new non-asymptotic upper bounds on the sizes of the optimal codebooks. The bounds are stronger than known bounds and imply classical asymptotic bounds. The bounds were derived via a hypergraph characterization of the optimal codebook and a linear programming argument. The approach was extended to derive bounds on codebooks for general constrained sources and was demonstrated for run-length limited sources. The paper concluded with a discussion on numerical results and on the quality of these bounds.

\appendix[Proof of Theorem \ref{thm:rate}] \label{app:rate}
\begin{proof} First consider $\tau \in [0,\half).$ \\ 
For such a value of $\tau$, the bound \eqref{eq:explicit} applies. 
By \eqref{eq:explicit},
\begin{align*}
U_{q,\tau n,n} &= \sum_{r=3}^{(1-\tau) n}\frac{q(q-1)^{r-1}\binom{(1-\tau) n-1}{r-1}}{\delta(r,\tau n)+ \sum_{i=(2\tau-1) n+r-1}^{\min(\tau n-2,r-3)}\delta(r-2,i)}     \non \\
 &\qquad+ \sum_{r=1}^{2}q(q-1)^{r-1}\binom{(1-\tau) n-1}{r-1}. 
\end{align*}
Notice that the second sum being a mere polynomial in $n$ can be ignored in comparison to the first sum. Below, we focus only on the first term and estimate its asymptotics by finding its exponent.

Put $r= \rho n$ so that $\rho \in [0,1-\tau]$, and let \begin{align}
N(\rho;\tau)&=\limn \frac{1}{n}\log_q q(q-1)^{\rho n-1}\binom{(1-\tau) n-1}{\rho n-1},\non\\
D_1(\rho;\tau) &= \limn \frac{1}{n}\log_q\delta(\rho n,\tau n), \non \\
D_2(\rho;\tau) &= \limn \frac{1}{n}\log_q \sum_{i=(2\tau -1 + \rho)n-1}^{\min(\tau n-2,\rho n-3)} \delta(\rho n-2,i).\non 
% &= \frac{1}{n}+\frac{\rho[(1-\tau)n-1]}{n} \log_q (q-1) + \frac{(1-\tau) n-1}{n} \frac{1}{(1-\tau) n-1}\log_q \binom{(1-\tau) n-1}{\rho [(1-\tau)n-1]}\\
% \\
% &\delta(\rho[(1-\tau)n-1],\tau n) = \sum_{i=0}^{\tau n} \binom{\rho[(1-\tau)n-1] - \tau n}{i}
\end{align}
Here $N(\rho;\tau)$ is the exponent of the numerator and the exponent of the denominator is 
$$D(\rho;\tau) =\max (D_1(\rho;\tau), D_2(\rho;\tau)).$$
Therefore, the asymptotic rate function satisfies
\[R_q(\tau) \leq \max_{0 \leq \rho \leq 1-\tau} N(\rho;\tau) - D(\rho;\tau).\]
We now calculate the above exponents. It is easy to see that $$N(\rho;\tau)= (1-\tau)h_q\left(\frac{\rho}{1-\tau}\right), $$ 
which is as required.
% For the denominator,  where $D_1(\rho;\tau)$ is the exponent of $\delta(\rho n,\tau n)$ and $D_2(\rho;\tau)$ is the exponent of $\sum_{i=(2\tau-1) n+\rho n-1}^{\min(\tau n-2,\rho n -3)}\delta(\rho n-2,i)$. 
Next consider $D_1(\rho;\tau)$. Clearly, if $\rho\leq \tau,$ $D_1(\rho;\tau)=0$. If $\tau \leq \frac{\rho-\tau}{2},$ \ie, $\rho \geq 3\tau$, 
\begin{align*}
% = \frac{\rho[(1-\tau)n-1] - \tau n}{n}\frac{1}{\rho[(1-\tau)n-1] - \tau n}\frac{\log_2 \delta(\rho[(1-\tau)n-1],\tau n)}{\log_2q} \\
D_1(\rho;\tau)= (\rho-\tau) \frac{h\left(\frac{\tau}{\rho -\tau}\right)}{\log_2 q}.
\end{align*}
On the other hand if $\rho < 3\tau,$ $D_1(\rho;\tau)= \frac{\rho-\tau}{\log_2 q}.$ In summary, we get
\begin{align*}
  D_1(\rho;\tau)&=      \I{\rho>\tau}\left(\frac{(\rho -\tau) h\left( \min \left( \frac{\tau}{\rho -\tau},\half \right) \right)}{\log_2 q}\right).  \end{align*}
Now consider $D_2(\rho;\tau)$. 
Recall from \eqref{eq:delta} that if $i <0, \delta(\rho n -2,i)=0$.  In the expression for $D_2(\rho;\tau)$, put $i = \mu n,$ so that $\mu \in [\max(2\tau +\rho -1,0), \min(\tau,\rho)].$ Then arguing as above, we get 
% If $\rho \leq 1-2\tau,$ 
% \begin{align*}
%  D_2(\rho;\tau) &= \max_{0 \leq \mu \leq \min(\tau,\rho)} \frac{(\rho-\mu)h\left( \min \left( \frac{\mu}{\rho-\mu} \right),\half \right)       }{\log_2 q}.           
% \end{align*}
% If $\rho > 1-2\tau,$
\begin{align*}
 D_2(\rho;\tau) &= \max_{m_{\tau,\rho}\leq \mu \leq \min(\tau,\rho)} \frac{(\rho-\mu)h\left( \min \left( \frac{\mu}{\rho-\mu} \right),\half \right)       }{\log_2 q},           
\end{align*}
where $m_{\tau,\rho}=\max(2\tau +\rho -1,0),$ as stated in the theorem.

We now show that $D_2(\rho;\tau)$ dominates $D_1(\rho;\tau)$ for any $\rho,\tau$. If $\rho \leq \tau, D_1(\rho;\tau) \equiv 0$, 
so, clearly, $D_2(\rho;\tau) \geq D_1(\rho;\tau)$. However, if $\rho >\tau$, we find that  $\mu =\tau$ satisfies $\mu \in [m_{\tau,\rho}, \min(\tau,\rho)].$ To see this, observe that a) $\min(\tau,\rho) =\tau$, since $\rho>\tau$, and b) $\tau \geq m_{\tau,\rho}$ if and only if $\rho \leq 1-\tau$, which is the assumed range on $\rho.$ But for $\mu =\tau$ the value of the maximand above equals $D_1(\rho;\tau).$ Consequently, $D_2(\rho;\tau)$, which involves a maximization over $\mu$,  dominates $D_1(\rho;\tau).$ In summary, \[D(\rho;\tau) = D_2(\rho;\tau),    \]
as required. This completes the first part of the theorem pertaining to $\tau \in [0,\half)$.

Now consider $\tau \geq \half$ and use the trivial bound from \eqref{eq:trivial}. In this case, clearly,
$$R_q(\tau) \leq (1-\tau).$$ 
% 
% 
% and $i \sim \mu n$, $\mu \in [0,1]$. If $\mu \leq \frac{\rho(1-\tau)}{3m_{\tau,\rho}} = \frac{1}{3\min(\frac{\tau}{\rho(1-\tau)},1)},$
% \begin{align*}
%  \frac{1}{n}\log_q &\delta(\rho((1-\tau)n-1)-3,\mu m_{\tau,\rho}n) \\ 
%  &\asn \left(\rho(1-\tau) -\mu \right) \frac{h\left(\frac{\mu}{\rho(1-\tau) -\mu}\right)}{\log_2 q}
% \end{align*}
% If $\mu > \frac{1}{3\min(\frac{\tau}{\rho(1-\tau)},1)},$
% \[\frac{1}{n}\log_q \delta(\rho((1-\tau)n-1)-3,\mu m_{\tau,\rho}n) \asn \frac{\rho(1-\tau) -\mu }{\log_2 q}\]
% 
% 
This covers all cases and the proof is complete.
\end{proof}

\bibliographystyle{IEEEtran.bst} 
\bibliography{ref}
\end{document}